\newcommand{\overbar}[1]{\mkern 1.25mu\overline{\mkern-1.25mu#1\mkern-0.25mu}\mkern 0.25mu}
\newcommand{\overbarr}[1]{\mkern 3.5mu\overline{\mkern-3.5mu#1\mkern-0.75mu}\mkern 0.75mu}
\newcommand{\mytilde}[1]{\mkern 0.5mu\widetilde{\mkern-0.5mu#1\mkern-0mu}\mkern 0mu}
\newcommand{\openone}{\mathds{1}}
\newcommand{\CLM}{C_{\mathrm{LM}}}
\newcommand{\CSC}{C_{\mathrm{SC}}}
\newcommand{\ISC}{I_{\mathrm{SC}}}
\newcommand{\CM}{C_{\mathrm{M}}}
\newcommand{\Qmin}{Q'_{\mathrm{min}}}
\newcommand{\Phat}{\widehat{P}}
\newcommand{\pebar}{\overbar{p}_{\mathrm{e}}}
\newcommand{\pe}{p_{\mathrm{e}}}
\newcommand{\LM}{I_{\mathrm{LM}}}
\newcommand{\atilde}{\tilde{a}}
\newcommand{\Ptilde}{\widetilde{P}}
\newcommand{\Wtilde}{\widetilde{W}}
\newcommand{\Wvtilde}{\widetilde{\boldsymbol{W}}}
\newcommand{\ubar}{\overbar{u}}
\newcommand{\xbar}{\overbar{x}}
\newcommand{\Xbar}{\overbarr{X}}
\newcommand{\av}{\boldsymbol{a}}
\newcommand{\uvbar}{\overbar{\boldsymbol{u}}}
\newcommand{\uv}{\boldsymbol{u}}
\newcommand{\Uvbar}{\overbar{\boldsymbol{U}}}
\newcommand{\Uv}{\boldsymbol{U}}
\newcommand{\vv}{\boldsymbol{v}}
\newcommand{\xvbar}{\overbar{\boldsymbol{x}}}
\newcommand{\xvtilde}{\widetilde{\boldsymbol{x}}}
\newcommand{\xv}{\boldsymbol{x}}
\newcommand{\Xvbar}{\overbarr{\boldsymbol{X}}}
\newcommand{\Xvtilde}{\mytilde{\boldsymbol{X}}}
\newcommand{\Xv}{\boldsymbol{X}}
\newcommand{\yv}{\boldsymbol{y}}
\newcommand{\Yv}{\boldsymbol{Y}}
\newcommand{\Ac}{\mathcal{A}}
\newcommand{\Cc}{\mathcal{C}}
\newcommand{\Pc}{\mathcal{P}}
\newcommand{\Uc}{\mathcal{U}}
\newcommand{\Xc}{\mathcal{X}}
\newcommand{\Yc}{\mathcal{Y}}
\newcommand{\EE}{\mathbb{E}}
\newcommand{\PP}{\mathbb{P}}
\newcommand{\RR}{\mathbb{R}}
\newcommand{\defeq}{\triangleq}
\newcommand{\peobar}{\overbar{p}_{\mathrm{e},0}}
\newcommand{\peibar}{\overbar{p}_{\mathrm{e},1}}
\newcommand{\Eorcc}{E_{\mathrm{r},0}^{\mathrm{cc}}}
\newcommand{\qv}{\boldsymbol{q}}
\newcommand{\Wv}{\boldsymbol{W}}
  \theoremstyle{plain} 
  \theoremstyle{plain}
  \newtheorem{lem}{\protect\lemmaname}
  \theoremstyle{plain} 
  \newtheorem{ce}{\protect\cename}
  \providecommand{\lemmaname}{Lemma}
  \providecommand{\propositionname}{Proposition}
  \providecommand{\cename}{Counter-Example}
\begin{document}

\title{A Counter-Example to the Mismatched \\ Decoding Converse for Binary-Input \\ Discrete Memoryless
Channels}

\author{Jonathan Scarlett, Anelia Somekh-Baruch, Alfonso Martinez and Albert
Guill\'en i F\`abregas}

\long\def\symbolfootnote[#1]#2{\begingroup\def\thefootnote{\fnsymbol{footnote}}\footnote[#1]{#2}\endgroup}
\maketitle

\begin{abstract}
This paper studies the mismatched decoding problem for binary-input
discrete memoryless channels. An example is provided for which an
achievable rate based on superposition coding exceeds
the LM rate (Hui, 1983; Csisz\'ar-K\"orner, 1981), thus providing a counter-example
to a previously reported converse result (Balakirsky, 1995). Both
numerical evaluations and theoretical results are used in establishing
this claim. 
\end{abstract}

\symbolfootnote[0]{J.~Scarlett is with the Laboratory for Information and Inference Systems, \'Ecole Polytechnique F\'ed\'erale de Lausanne, CH-1015, Switzerland (e-mail: jmscarlett@gmail.com). 

A.~Somekh-Baruch is with the Faculty of Engineering at Bar-Ilan University, Ramat-Gan, Israel. (e-mail: somekha@biu.ac.il).

A.~Martinez is with the Department of Information and Communication Technologies,  Universitat Pompeu Fabra, 08018 Barcelona, Spain (e-mail: alfonso.martinez@ieee.org). 

A.~Guill\'en i F\`abregas is with the Instituci\'o Catalana de Recerca i Estudis  Avan\c{c}ats (ICREA), the Department of Information and Communication Technologies,  Universitat Pompeu Fabra, 08018 Barcelona, Spain, and also with the Department of  Engineering, University of Cambridge, Cambridge, CB2 1PZ, U.K. (e-mail:  guillen@ieee.org).

This work has been funded by the European Research Council under ERC 
grant agreement 259663, by the European Union's 7th Framework Programme 
under grant agreement 303633, by the Spanish Ministry of Economy and 
Competitiveness under grants RYC-2011-08150 and TEC2012-38800-C03-03,
and by the Israel Science Foundation under Grant 2013/919.

This work was presented at the Information Theory and Applications Workshop (2015),
and is an extended version of a paper accepted to the IEEE Transactions on Information Theory.}

\section{Introduction} \label{sec:MAC_INTRODUCTION}

In this paper, we consider the problem of channel coding with a given
(possibly suboptimal) decoding rule, i.e.~mismatched decoding 
\cite{Hui,Csiszar1,Csiszar2,Merhav}.  This problem is of significant
interest in settings where the optimal decoder is ruled out due to
channel uncertainty or implementation constraints, and also has
several connections to theoretical problems such as zero-error capacity.
Finding a single-letter expression for the channel capacity with mismatched
decoding is a long-standing open problem, and is believed to be very
difficult; the vast majority of the literature has focused on 
achievability results.  The only reported single-letter converse
result for general decoding metrics is that of Balakirsky \cite{Balikirsky},
who considered binary-input discrete memoryless channels (DMCs) and stated 
a matching converse to the achievable rate of Hui \cite{Hui}
and Csisz\'ar-K\"orner \cite{Csiszar1}.  However, in the present paper,
we provide a counter-example to this converse, i.e.~a binary-input
DMC for which this rate can be exceeded.

We proceed by describing the problem setup.  The encoder and decoder
share a codebook $\Cc=\{\xv^{(1)},\dotsc\xv^{(M)}\}$
containing $M$ codewords of length $n$. The encoder receives a message
$m$ equiprobable on the set $\{1,\dotsc M\}$ and transmits $\xv^{(m)}$.
The output sequence $\yv$ is generated according to $W^{n}(\yv|\xv)=\prod_{i=1}^{n}W(y_{i}|x_{i})$,
where $W$ is a single-letter transition law from $\Xc$ to
$\Yc$. The alphabets are assumed to be finite, and hence
the channel is a DMC. 
Given the output sequence $\yv$, an estimate of the message is formed as follows:
\begin{equation}
    \hat{m}=\arg\max_{j}q^{n}(\xv^{(j)},\yv),\label{eq:CNV_DecodingRule}
\end{equation}
where $q^{n}(\xv,\yv)\defeq\prod_{i=1}^{n}q(x_{i},y_{i})$
for some non-negative function $q$ called the \emph{decoding metric}.
An error is said to have occurred if $\hat{m}$ differs from $m$, and
the error probability is denoted by 
\begin{equation}
    \pe \triangleq \PP[\hat{m} \ne m].
\end{equation}
We assume that ties are broken as errors.
A rate $R$ is said to be achievable if, for all $\delta>0$,
there exists a sequence of codebooks with $M\ge e^{n(R-\delta)}$
codewords having vanishing error probability under the decoding rule
in \eqref{eq:CNV_DecodingRule}. The mismatched capacity of $(W,q)$
is defined to be the supremum of all achievable rates, and is denoted by $\CM$.

In this paper, we focus on binary-input DMCs, and we will be primarily
interested in the achievable rates based on constant-composition codes
due to Hui \cite{Hui} and Csisz\'ar and K\"orner \cite{Csiszar1}, an
achievable rate based on superposition coding by the present authors 
\cite{JournalMU,MMSomekh,Thesis}, and a reported converse by Balakirsky
\cite{Balikirsky}.  These are introduced in Sections \ref{sec:INTR_ACHIEV}
and \ref{sec:INTR_CONV}.

\subsection{Notation} \label{sub:CNV_NOTATION}

 The set of all probability mass functions
(PMFs) on a given finite alphabet, say $\Xc$, is denoted
by $\Pc(\Xc)$, and similarly for conditional distributions
(e.g. $\Pc(\Yc|\Xc)$). The marginals of a joint  distribution
$P_{XY}(x,y)$ are denoted by $P_{X}(x)$ and $P_{Y}(y)$. Similarly,
$P_{Y|X}(y|x)$ denotes the conditional distribution induced by $P_{XY}(x,y)$.
We write $P_{X}=\Ptilde_{X}$ to denote element-wise equality
between two probability distributions on the same alphabet. Expectation
with respect to a distribution $P_{X}(x)$ is denoted by $\EE_{P}[\cdot]$.
Given a distribution $Q(x)$ and a conditional distribution $W(y|x)$,
the joint distribution $Q(x)W(y|x)$ is denoted by $Q\times W$. Information-theoretic
quantities with respect to a given distribution (e.g. $P_{XY}(x,y)$)
are written using a subscript (e.g. $I_{P}(X;Y)$). All logarithms
have base $e$, and all rates are in nats/use.

\subsection{Achievability} \label{sec:INTR_ACHIEV}

The most well-known achievable rate in the literature, and the one
of the most interest in this paper, is the LM rate, which is
given as follows for an arbitrary input distribution $Q\in\Pc(\Xc)$:
\begin{equation}
    \LM(Q)\defeq\min_{\substack{\Ptilde_{XY}\in\Pc(\Xc\times\Yc)\,:\,\Ptilde_{X}=Q,\,\Ptilde_{Y}=P_{Y}\\
    \EE_{\Ptilde}[\log q(X,Y)]\ge\EE_{P}[\log q(X,Y)]}} I_{\Ptilde}(X;Y),\label{eq:CNV_PrimalLM}
\end{equation}
where $P_{XY}\defeq Q\times W$.
This rate was derived independently by Hui \cite{Hui} and Csisz\'ar-K\"orner
\cite{Csiszar1}.  The proof uses a standard random coding construction
in which each codeword is independently drawn according to the uniform distribution
on a given type class.  The following alternative expression was given
by Merhav \emph{et al.} \cite{Merhav} using Lagrange duality:
\begin{equation}
    \LM(Q)\defeq\sup_{s\ge0,a(\cdot)}\sum_{x,y}Q(x)W(y|x)\log\frac{q(x,y)^{s}e^{a(x)}}{\sum_{\xbar}Q(\xbar)q(\xbar,y)^{s}e^{a(\xbar)}}.\label{eq:CNV_DualLM}
\end{equation}
Since the input distribution $Q$ is arbitrary, we can optimize it
to obtain the achievable rate $\CLM\defeq\max_{Q}\LM(Q)$.
In general, $\CM$ may be strictly higher than $\CLM$ \cite{Csiszar1,MacMM}.

The first approach to obtaining achievable rates
exceeding $\CLM$ was given in \cite{Csiszar1}. The idea
is to code over pairs of symbols: If a rate $R$ is achievable for
the channel $W^{(2)}((y_{1},y_{2})|(x_{1},x_{2}))\defeq W(y_{1}|x_{1})W(y_{2}|x_{2})$
with the metric $q^{(2)}((x_{1},x_{2}),(y_{1},y_{2}))\defeq q(x_{1},y_{1})q(x_{2},y_{2})$,
then $\frac{R}{2}$ is achievable for the original channel $W$ with
the metric $q$. Thus, one can apply the LM rate to $(W^{(2)},q^{(2)})$,
optimize the input distribution on the product alphabet, and infer an achievable
rate for $(W,q)$; we denote this rate by $\CLM^{(2)}$.
An example was given in \cite{Csiszar1} for which $\CLM^{(2)}>\CLM$.
Moreover, as stated in \cite{Csiszar1}, the preceding arguments can
be applied to the $k$-th order product channel for $k>2$; we denote
the corresponding achievable rate by $\CLM^{(k)}$. It
was conjectured in \cite{Csiszar1} that $\lim_{k\to\infty}\CLM^{(k)}=\CM$.
It should be noted that the computation of $\CLM^{(k)}$ is generally
prohibitively complex even for relatively small values of $k$, since
$\LM(Q)$ is non-concave in general \cite{MMRevisited}.

Another approach to improving on $\CLM$ is to use multi-user
random coding ensembles exhibiting more structure than the standard
ensemble containing independent codewords. This idea was first proposed
by Lapidoth \cite{MacMM}, who used parallel coding techniques to
provide an example where $\CM=C$ (with $C$ being the
matched capacity) but $\CLM<C$. Building on these ideas,
further achievable rates were provided by the present authors \cite{MMSomekh,JournalMU,Thesis}
using superposition coding techniques. Of particular interest in this
paper is the following. For any finite auxiliary alphabet $\Uc$
and input distribution $Q_{UX}$, the rate $R=R_{0}+R_{1}$ is achievable
for any $(R_{0},R_{1})$ satisfying\footnote{The condition in \eqref{eq:CNV_RsumSC} 
has a slightly different form to that in \cite{JournalMU}, which contains the
additional constraint $I_{\Ptilde}(U;X) \le R_0$ and replaces the $[\cdot]^{+}$
function in the objective by its argument.  Both forms are given in \cite{MMSomekh}, and 
their equivalence is proved therein.  A simple way of seeing this equivalence is 
by noting that both expressions can be written as $0 \le \min_{\Ptilde_{UXY}}\max\big\{I_{\Ptilde}(U,X;Y) - (R_0+R_1), I_{\Ptilde}(U;X) - R_0 \big\}$.}
\begin{gather}
    R_{1} \le\min_{\substack{\Ptilde_{UXY}\in\Pc(\Uc\times\Xc\times\Yc)\,:\,\Ptilde_{UX}=Q_{UX},\,\Ptilde_{UY}=P_{UY}\\
    \EE_{\Ptilde}[\log q(X,Y)]\ge\EE_{P}[\log q(X,Y)]}} I_{\Ptilde}(X;Y|U)\label{eq:CNV_R1SC}\\
    R_{0} \le\min_{\substack{\Ptilde_{UXY}\in\Pc(\Uc\times\Xc\times\Yc)\,:\,\Ptilde_{UX}=Q_{UX},\,\Ptilde_{Y}=P_{Y}\\
    \EE_{\Ptilde}[\log q(X,Y)]\ge\EE_{P}[\log q(X,Y)]}} I_{\Ptilde}(U;X) + \big[I_{\Ptilde}(X;Y|U) - R_1\big]^{+}, \label{eq:CNV_RsumSC}
\end{gather}
where $P_{UXY} \triangleq Q_{UX} \times W$.
We define $I_{\mathrm{SC}}(Q_{UX})$ to be the maximum of $R_{0}+R_{1}$
subject to these constraints, and we write the optimized rate as $\CSC\defeq\sup_{\Uc,Q_{UX}}I_{\mathrm{SC}}(Q_{UX})$.
We also note the following dual expressions for \eqref{eq:CNV_R1SC}--\eqref{eq:CNV_RsumSC}
\cite{JournalMU,Thesis}: 
\begin{gather}
    R_{1}\le\sup_{s\ge0,a(\cdot,\cdot)}\sum_{u,x,y}Q_{UX}(u,x)W(y|x)\log\frac{q(x,y)^{s}e^{a(u,x)}}{\sum_{\xbar}Q_{X|U}(\xbar|u)q(\xbar,y)^{s}e^{a(u,\xbar)}}\label{eq:SC_R1_Dual}\\
    R_{0}\le\sup_{\rho_{1}\in[0,1],s\ge0,a(\cdot,\cdot)}\sum_{u,x,y}Q_{UX}(u,x)W(y|x)\log\frac{\big(q(x,y)^{s}e^{a(u,x)}\big)^{\rho_{1}}}{\sum_{\overline{u}}Q_{U}(\overline{u})\Big(\sum_{\xbar}Q_{X|U}(\xbar|\overline{u})q(\xbar,y)^{s}e^{a(\overline{u},\xbar)}\Big)^{\rho_{1}}}-\rho_{1}R_{1}.\label{eq:SC_Rsum_Dual}
\end{gather}
We outline the derivations of both the primal and dual expressions 
in Appendix \ref{sec:CNV_DERIVATIONS}. 

We note that $\CSC$ is at least as high as Lapidoth's
parallel coding rate \cite{MMSomekh,JournalMU,Thesis}, though it is not
known whether it can be strictly higher. In \cite{JournalMU}, a refined
version of superposition coding was shown to yield a rate improving
on $I_{\mathrm{SC}}(Q_{UX})$ for fixed $(\Uc,Q_{UX})$, but
the standard version will suffice for our purposes.

The above-mentioned technique of passing to the $k$-th order product
alphabet is equally valid for the superposition coding achievable
rate, and we denote the resulting achievable rate by $\CSC^{(k)}$.
The rate $\CSC^{(2)}$ will be particularly important in
this paper, and we will also use the analogous quantity $I_{\mathrm{SC}}^{(2)}(Q_{UX})$
with a fixed input distribution $Q_{UX}$. Since the input alphabet
of the product channel is $\Xc^2$, one might
more precisely write the input distribution as $Q_{UX^{(2)}}$, but
we omit this additional superscript.  The choice $\Uc=\{0,1\}$ for
the auxiliary alphabet will prove to be sufficient for our purposes.

\subsection{Converse} \label{sec:INTR_CONV}

Very few converse results have been provided for the mismatched decoding
problem. Csisz\'ar and Narayan \cite{Csiszar2} showed that $\lim_{k\to\infty}\CLM^{(k)}=\CM$
for erasures-only metrics, i.e.~metrics such that $q(x,y)=\max_{x,y}q(x,y)$ for all
$(x,y)$ such that $W(y|x)>0$. More recently, multi-letter converse
results were given by Somekh-Baruch \cite{MMGeneralFormula}, yielding a general
formula for the mismatched capacity in the sense of Verd\'u-Han \cite{VerduHan}.
However, these expressions are not computable.

The only general single-letter converse result presented in the 
literature is that of Balakirsky \cite{ConverseMM}, who reported 
that $\CLM=\CM$ for binary-input DMCs. In the following section, we 
provide a counter-example showing that in
fact the strict inequality $\CM>\CLM$ can hold
even in this case. 

\section{The Counter-Example}

The main claim of this paper is the following; the details are given
in Section \ref{sec:CNV_PROOF}.

\begin{ce} \label{prop:CNV_MainResult}
    Let $\Xc=\{0,1\}$ and $\Yc=\{0,1,2\}$,
    and consider the channel and metric described by the entries of the
    $|\Xc|\times|\Yc|$ matrices
    \begin{align}
        \Wv & =\left[\begin{array}{ccc}
        0.97 & 0.03 & 0\\
        0.1 & 0.1 & 0.8
        \end{array}\right],\qquad\qv=\left[\begin{array}{ccc}
        1 & 1 & 1\\
        1 & 0.5 & 1.36
        \end{array}\right].\label{eq:CNV_Channel}
    \end{align}
    Then the LM rate satisfies
    \begin{equation}
        0.136874 \le \CLM \le 0.136900 \quad\mathrm{nats/use},\label{eq:CNV_BoundLM}
    \end{equation}
    whereas the superposition coding rate obtained by considering the
    second-order product of the channel is lower bounded by
    \begin{equation}
        \CSC^{(2)}\ge0.137998\quad\mathrm{nats/use}.\label{eq:CNV_BoundSC2}
    \end{equation}
    Consequently, we have $\CM>\CLM$.
\end{ce}

\begin{figure}
    \begin{centering}
    \includegraphics[width=0.7\paperwidth]{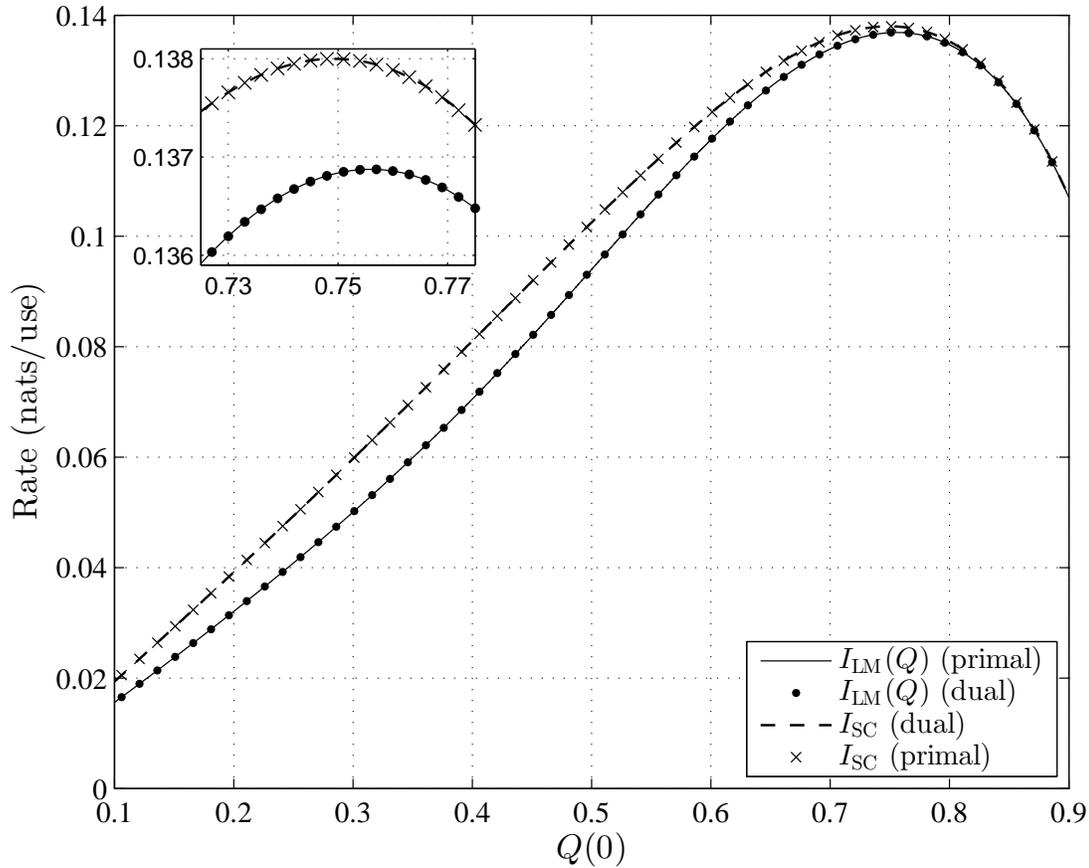}
    \par
    \end{centering}
    
    \caption{Numerical evaluations of the LM rate $\LM(Q)$ as a function
    of the (first entry of the) input distribution, and the corresponding superposition
    coding rate $I_{\mathrm{SC}}^{(2)}(Q_{UX})$ using the construction described
    in Section \ref{sub:CNV_SC_EVAL}.  The matched capacity is
    $C \approx 0.4944$ nats/use, and is achieved by $Q(0) \approx 0.5398$. \label{fig:CNV_MainPlot} }
\end{figure}
 
We proceed by presenting various points of discussion.

\paragraph*{Numerical Evaluations}

While \eqref{eq:CNV_BoundLM} and \eqref{eq:CNV_BoundSC2} are obtained
using numerical computations, and the difference between the two is
small, we will take care in ensuring that the gap is genuine,
rather than being a matter of numerical accuracy.  All of the code
used in our computations is available online \cite{PaperBI_Code}.

Figure \ref{fig:CNV_MainPlot} plots our numerical evaluations
of $\LM(Q)$ and $\ISC^{(2)}(Q_{UX})$ for a range of input 
distributions; for the latter, $Q_{UX}$ is determined from $Q$ in a manner to 
be described in Section \ref{sub:CNV_SC_EVAL}.
Note that this plot is only meant to help the reader visualize the
results; it is not sufficient to establish Counter-Example \ref{prop:CNV_MainResult}
in itself. Nevertheless, it is reassuring to see that the curves corresponding
to the primal and dual expressions are indistinguishable. 

Our computations suggest that
\begin{equation}
\CLM\approx0.136875\quad\mathrm{nats/use},\label{eq:CNV_ApproxLM}
\end{equation}
and that the optimal input distribution is approximately
\begin{equation}
    Q=\left[\begin{array}{cc}
    0.75597 & 0.24403
    \end{array}\right].\label{eq:CNV_OptQ}
\end{equation}
The matched capacity is significantly higher than $\CLM$, namely 
$C \approx 0.4944$ nats/use, with a corresponding input distribution
approximately equal to $[0.5398 ~ 0.4602]$.
As seen in the proof, the fact that the right-hand side of \eqref{eq:CNV_BoundLM}
exceeds that of \eqref{eq:CNV_ApproxLM} by $2.5 \times 10^{-5}$ is due to the use of (possibly
crude) bounds on the loss in the rate when $Q$ is slightly suboptimal.

\paragraph*{Other Achievable Rates}

One may question whether \eqref{eq:CNV_BoundSC2} can be improved
by considering $\CSC^{(k)}$ for $k>2$.  However, we were unable to find 
any such improvement when we tried $k=3$; see Section \ref{sub:CNV_SC_EVAL}
for further discussion on this attempt.  Similarly, we observed no improvement
on \eqref{eq:CNV_ApproxLM} when we computed $\LM^{(2)}(Q^{(2)})$ with a 
brute force search over $Q^{(2)}\in\Pc(\Xc^2)$ to two decimal places.
Of course, it may still be that $\CLM^{(k)} > \CLM$ for some $k>2$, but
optimizing $Q^{(k)}$ quickly becomes computationally difficult; even for 
$k=3$, the search space is $7$-dimensional with no apparent convexity structure.

Our numerical findings also showed no improvement of the superposition
coding rate $\CSC$ for the original channel (as opposed to the product
channel) over the LM rate $\CLM$.

We were also able to obtain the achievable rate in \eqref{eq:CNV_BoundLM}
using Lapidoth's expurgated parallel coding rate \cite{MacMM} (or more
precisely, its dual formulation from \cite{JournalMU}) to the second-order
product channel.  In fact, this was
done by taking the input distribution $Q_{UX}$ and the dual parameters 
$(s,a,\rho_1)$ used in \eqref{eq:SC_R1_Dual}--\eqref{eq:SC_Rsum_Dual} (see
Section \ref{sub:CNV_SC_EVAL}),
and ``transforming'' them into parameters for the expurgated parallel 
coding ensemble that achieve an identical rate.  
Details are given in Appendix \ref{sec:CNV_MAC_RATE}.

\paragraph*{Choices of Channel and Metric}

While the decoding metric in \eqref{eq:CNV_Channel} may appear to
be unusual, it should be noted that any decoding metric with $\max_{x,y}q(x,y)>0$
is equivalent to another metric yielding a matrix of this form with
the first row and first column equal to one \cite{Balikirsky,ConverseMM}.

One may question whether the LM rate can be improved for binary-input
binary-output channels, as opposed to our ternary-output example.  
However, this is not possible, since for any such channel the LM rate
is either equal to zero or the matched capacity, and in either case
it coincides with the mismatched capacity \cite{Csiszar2}.

Unfortunately, despite considerable effort, we have been unable to
understand the analysis given in \cite{ConverseMM} in sufficient
detail to identify any major errors therein. We also remark that for
the vast majority of the examples we considered, $\CLM$
was indeed greater than or equal to all other achievable rates that
we computed. However, \eqref{eq:CNV_Channel} was not the only counter-example,
and others were found with $\min_{x,y}W(y|x)>0$ (in contrast with 
\eqref{eq:CNV_Channel}).  For example, a similar gap between the rates 
was observed when the first row of $\Wv$ in \eqref{eq:CNV_Channel} was replaced
by $[0.97 ~ 0.02 ~ 0.01]$.

\section{Establishing Counter-Example \ref{prop:CNV_MainResult}} \label{sec:CNV_PROOF}

While Counter-Example \ref{prop:CNV_MainResult} is concerned with the specific channel and metric
given in \eqref{eq:CNV_Channel}, we will present several results
for more general channels with $\Xc=\{0,1\}$ and $\Yc=\{0,1,2\}$
(and in some cases, arbitrary finite alphabets). To make some of the
expressions more compact, we define $Q_{x}\defeq Q(x)$, $W_{xy}\defeq W(y|x)$
and $q_{xy}\defeq q(x,y)$ throughout this section.

\subsection{Auxiliary Lemmas}

The optimization of $\LM(Q)$ over $Q$ can be difficult,
since $\LM(Q)$ is non-concave in $Q$ in general \cite{MMRevisited}.
Since we are considering the case $|\Xc|=2$, this optimization
is one-dimensional, and we thus resort to a straightforward brute-force search
of $Q_{0}$ over a set of regularly-spaced points in $[0,1]$. To
establish the upper bound in \eqref{eq:CNV_BoundLM}, we must bound
the difference $\CLM-\LM(Q_0)$ for the choice
of $Q_{0}$ maximizing the LM rate among all such points. Lemma \ref{lem:CNV_DiffLM}
below is used for precisely this purpose; before stating it, we present
a preliminary result on the continuity of the binary entropy function
$H_{2}(\alpha)\defeq-\alpha\log\alpha-(1-\alpha)\log(1-\alpha)$.

It is well-known that for two distributions $Q$ and $Q^{\prime}$
on a common finite alphabet, we have $|H(Q^{\prime})-H(Q)|\le\delta\log\frac{|\Xc|}{\delta}$
whenever $\|Q^{\prime}-Q\|_{1}\le\delta$ \cite[Lemma 2.7]{CsiszarBook}.
The following lemma gives a refinement of this statement for the case
that $|\Xc|=2$ and $\min\{Q_{0}^{\prime},Q_{1}^{\prime}\}$
is no smaller than a predetermined constant. 
\begin{lem} \label{lem:CNV_DiffH}
    Let $Q^{\prime}\in\Pc(\Xc)$
    be a PMF on $\Xc=\{0,1\}$ such that $\min\{Q_{0}^{\prime},Q_{1}^{\prime}\}\ge \Qmin$
    for some $\Qmin>0$. For any PMF $Q\in\Pc(\Xc)$
    such that $|Q_{0}-Q_{0}^{\prime}|\le\delta$ (or equivalently, $|Q_{1}-Q_{1}^{\prime}|\le\delta$),
    we have
    \begin{equation}
    \big|H(Q^{\prime})-H(Q)\big|\le\delta\log\frac{1-\Qmin}{\Qmin}.
    \end{equation}
\end{lem}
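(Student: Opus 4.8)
The plan is to treat the claim as a statement about the binary entropy function alone. Writing $H(Q)=H_2(Q_0)$ and $H(Q^{\prime})=H_2(Q_{0}^{\prime})$, the inequality to be shown is $\big|H_2(Q_{0}^{\prime})-H_2(Q_0)\big|\le\delta\log\frac{1-\Qmin}{\Qmin}$ whenever $Q_{0}^{\prime}\in[\Qmin,1-\Qmin]$ and $|Q_0-Q_{0}^{\prime}|\le\delta$. Since $H_2$ is smooth on the open interval $(0,1)$, the natural tool is the mean value theorem, which reduces the problem to controlling the derivative of $H_2$ over the relevant range of arguments. By the symmetry $H_2(\alpha)=H_2(1-\alpha)$ I would first reduce, without loss of generality, to the case $Q_{0}^{\prime}\le\tfrac12$, replacing $(Q_0,Q_{0}^{\prime})$ by $(1-Q_0,1-Q_{0}^{\prime})$ if necessary (this preserves both the entropy values and the gap $|Q_0-Q_{0}^{\prime}|$).

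The key computation is that $H_2^{\prime}(\alpha)=\log\frac{1-\alpha}{\alpha}$. This derivative is strictly decreasing on $(0,1)$, positive on $(0,\tfrac12)$ and negative on $(\tfrac12,1)$, and odd-symmetric about $\alpha=\tfrac12$; consequently $|H_2^{\prime}(\alpha)|$ increases as $\alpha$ moves away from $\tfrac12$ toward either endpoint, and over the symmetric interval $[\Qmin,1-\Qmin]$ its maximum is attained at the two endpoints and equals $\log\frac{1-\Qmin}{\Qmin}$. The mean value theorem then furnishes a point $\xi$ strictly between $Q_0$ and $Q_{0}^{\prime}$ with $H_2(Q_{0}^{\prime})-H_2(Q_0)=H_2^{\prime}(\xi)\,(Q_{0}^{\prime}-Q_0)$, so that $\big|H_2(Q_{0}^{\prime})-H_2(Q_0)\big|=|H_2^{\prime}(\xi)|\,|Q_{0}^{\prime}-Q_0|\le|H_2^{\prime}(\xi)|\,\delta$. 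It therefore suffices to certify that $|H_2^{\prime}(\xi)|\le\log\frac{1-\Qmin}{\Qmin}$, i.e.\ that $\xi$ lies in $[\Qmin,1-\Qmin]$.

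The main obstacle is exactly this last step: confining the mean value point $\xi$ to $[\Qmin,1-\Qmin]$. Only $Q_{0}^{\prime}\ge\Qmin$ is guaranteed by hypothesis, whereas $Q_0$ is merely required to satisfy $|Q_0-Q_{0}^{\prime}|\le\delta$, so a perturbation \emph{toward} the boundary can in principle push $\xi$ into the region where $|H_2^{\prime}|$ exceeds the claimed bound. I would dispose of the two directions separately. For a perturbation \emph{toward} $\tfrac12$ (so that the segment joining $Q_0$ and $Q_{0}^{\prime}$ stays inside $[\Qmin,1-\Qmin]$), monotonicity of $H_2^{\prime}$ gives $|H_2^{\prime}(\xi)|\le|H_2^{\prime}(Q_{0}^{\prime})|\le\log\frac{1-\Qmin}{\Qmin}$ immediately. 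For the direction toward the endpoint I would instead invoke the concavity of $H_2$: the tangent line at $Q_{0}^{\prime}$ lies above the curve, yielding the one-sided estimate $H_2(Q_0)\le H_2(Q_{0}^{\prime})+|H_2^{\prime}(Q_{0}^{\prime})|\,\delta$ with no need to evaluate the derivative near the boundary, and I would combine this with the regime in which the lemma is applied (an interior optimizer $Q_{0}^{\prime}$ together with $\delta$ no larger than the grid spacing), so that $Q_0$ itself remains in $[\Qmin,1-\Qmin]$ and $\xi$ stays in range. The delicate point to state precisely is thus the admissible size of $\delta$ relative to the distance from $Q_{0}^{\prime}$ to the endpoints of $[0,1]$; everything else is a routine derivative bound.
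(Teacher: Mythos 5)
Your overall route---reduce to the binary entropy function, bound $|H_2'|$ on $[\Qmin,1-\Qmin]$ by $\log\frac{1-\Qmin}{\Qmin}$, and invoke the mean value theorem---is essentially the paper's argument (the paper phrases the derivative bound via the tangent line to the concave function $H_2$ at $Q_0'$ rather than via an MVT point $\xi$, but the content is the same). You have also correctly isolated the one step that is genuinely delicate: nothing in the hypotheses prevents $Q_0$ from lying outside $[\Qmin,1-\Qmin]$, in which case the intermediate point $\xi$ can land where $|H_2'|$ exceeds $\log\frac{1-\Qmin}{\Qmin}$. The paper's own one-line proof passes over this case in silence, so flagging it is to your credit.

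However, your proposed repair of that case does not work. For a perturbation toward the endpoint (say $Q_0<Q_0'\le\tfrac12$), the quantity to control is $H_2(Q_0')-H_2(Q_0)\ge0$, so you need a \emph{lower} bound on $H_2(Q_0)$. The tangent-line inequality $H_2(Q_0)\le H_2(Q_0')+(Q_0-Q_0')H_2'(Q_0')$ supplied by concavity is an \emph{upper} bound on $H_2(Q_0)$, and hence gives $H_2(Q_0')-H_2(Q_0)\ge(Q_0'-Q_0)H_2'(Q_0')$---the reverse of what is needed; indeed, since $H_2'$ is strictly decreasing, the inequality you would want, $H_2(Q_0')-H_2(Q_0)\le(Q_0'-Q_0)H_2'(Q_0')$, is false whenever $Q_0<Q_0'$. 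Your fallback---appealing to ``the regime in which the lemma is applied'' so that $Q_0$ stays in $[\Qmin,1-\Qmin]$---imports a hypothesis the lemma does not state, and the statement as written actually fails in the edge case $Q_0'=\Qmin$, $Q_0=\Qmin-\delta$ (with $\delta\le\Qmin$), where $H_2(Q_0')-H_2(Q_0)=\int_{\Qmin-\delta}^{\Qmin}\log\frac{1-t}{t}\,dt>\delta\log\frac{1-\Qmin}{\Qmin}$. So the obstacle you identified cannot be closed without either adding the assumption $Q_0\in[\Qmin,1-\Qmin]$ (which does hold where the lemma is invoked, since the grid $\Ac$ is contained in that interval and only the direction $H(Q)\ge H(Q')-\delta\log\frac{1-\Qmin}{\Qmin}$ is used) or enlarging the constant. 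Your write-up acknowledges the issue but leaves it unresolved, and the concavity step as described would not survive being made precise.
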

\begin{proof}
    Set $\Delta\defeq Q_{0}-Q_{0}^{\prime}$. Since $H_{2}(\cdot)$
    is concave, the straight line tangent to a given point always lies
    above the function itself. Assuming without loss of generality that
    $Q_{0}^{\prime}\le0.5$, we have
    \begin{align}
        \big|H_{2}(Q_{0}^{\prime}+\Delta)-H_{2}(Q_{0}^{\prime})\big| & \le|\Delta|\cdot\frac{dH_{2}}{d\alpha}\bigg|_{\alpha=Q_{0}^{\prime}}\\
         & =|\Delta|\log\frac{1-Q_{0}^{\prime}}{Q_{0}^{\prime}}.
    \end{align}
    The desired result follows since $\frac{1-Q_{0}^{\prime}}{Q_{0}^{\prime}}$
    is decreasing in $Q_{0}^{\prime}$, and since $Q_{0}^{\prime}\ge \Qmin$
    and $|\Delta|\le\delta$ by assumption.
\end{proof}
The following lemma builds on the preceding lemma, and is key to  
establishing Counter-Example \ref{prop:CNV_MainResult}.
\begin{lem}
    \label{lem:CNV_DiffLM}For any binary-input mismatched DMC, we have
    the following under the setup of Lemma \ref{lem:CNV_DiffH}: 
    \begin{equation}
        \LM(Q)\ge \LM(Q^{\prime})-\delta\log\frac{1-\Qmin}{\Qmin}-\frac{\delta\log2}{\Qmin}.\label{eq:CNV_DiffLM}
    \end{equation}
\end{lem}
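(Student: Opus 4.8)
The plan is to lower-bound $\LM(Q)$ by exhibiting a single good choice of dual variables in \eqref{eq:CNV_DualLM}, obtained by transplanting the optimizer for $Q^{\prime}$ and correcting it for the change of input distribution. Let $(s^{*},a^{*})$ attain (or approach) the supremum defining $\LM(Q^{\prime})$, and let
\begin{equation}
\Ptilde^{*}(x|y)\defeq\frac{Q^{\prime}_{x}\,q_{xy}^{s^{*}}e^{a^{*}(x)}}{\sum_{\xbar}Q^{\prime}_{\xbar}\,q_{\xbar y}^{s^{*}}e^{a^{*}(\xbar)}}
\end{equation}
denote the associated test (reverse) channel, so that $\LM(Q^{\prime})=H(Q^{\prime})+\sum_{x,y}Q^{\prime}_{x}W_{xy}\log\Ptilde^{*}(x|y)$. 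In the dual expression for $\LM(Q)$ I would keep $s=s^{*}$ but use the shifted offset $a(x)=a^{*}(x)+\log\frac{Q^{\prime}_{x}}{Q_{x}}$, which is well defined since $Q_{x}\ge\Qmin-\delta>0$ once $\delta<\Qmin$. The point of the shift is that it forces the denominator in \eqref{eq:CNV_DualLM} to collapse to the $Q^{\prime}$-normalization $\sum_{\xbar}Q^{\prime}_{\xbar}q_{\xbar y}^{s^{*}}e^{a^{*}(\xbar)}$, so that the resulting objective equals exactly $H(Q)+\sum_{x,y}Q_{x}W_{xy}\log\Ptilde^{*}(x|y)$.

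Since the dual is a supremum over $(s,a)$, this substitution yields the clean inequality
\begin{equation}
\LM(Q)\ge\LM(Q^{\prime})+\big[H(Q)-H(Q^{\prime})\big]+\sum_{x,y}(Q_{x}-Q^{\prime}_{x})\,W_{xy}\log\Ptilde^{*}(x|y).\label{eq:plan_key}
\end{equation}
The first bracket is handled immediately by Lemma \ref{lem:CNV_DiffH}: under the stated hypotheses it is at least $-\delta\log\frac{1-\Qmin}{\Qmin}$, which is exactly the first penalty term in \eqref{eq:CNV_DiffLM}. It therefore remains to show that the last ``tilt'' term is at least $-\frac{\delta\log2}{\Qmin}$.

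For that term I would use the binary structure together with the optimality (not merely feasibility) of $\Ptilde^{*}$. Writing $Q_{x}-Q^{\prime}_{x}=\pm\Delta$ with $|\Delta|\le\delta$, the tilt term equals $\Delta\big[\sum_{y}W_{0y}\log\Ptilde^{*}(0|y)-\sum_{y}W_{1y}\log\Ptilde^{*}(1|y)\big]$, so it suffices to control the bracket in absolute value by $\frac{\log2}{\Qmin}$. The two facts I would bring to bear are the marginal-matching condition $\sum_{y}P_{Y}^{\prime}(y)\Ptilde^{*}(x|y)=Q^{\prime}_{x}$ (stationarity of $a^{*}$ in the dual) and the pointwise bound $P_{Y}^{\prime}(y)\ge Q^{\prime}_{x}W_{xy}$, which lets one trade the $W_{x\cdot}$-weighting for the $P_{Y}^{\prime}$-weighting at the cost of a factor $1/\Qmin$; the remaining $\log2$ should come from the fact that each $\Ptilde^{*}(\cdot|y)$ is a binary PMF.

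The main obstacle is precisely this last step, because a naive bound on $\sum_{y}W_{xy}\log\Ptilde^{*}(x|y)$ fails: for a single $y$ the mass $\Ptilde^{*}(x|y)$ can be pushed arbitrarily close to zero, making $-\log\Ptilde^{*}(x|y)$ blow up, and neither the marginal constraint nor concavity of $\log$ alone prevents this. What must rescue the argument is that $\Ptilde^{*}$ is the LM \emph{optimizer}: if the metric tried to drive some $\Ptilde^{*}(x|y)$ to zero on a symbol carrying appreciable channel mass $W_{xy}$, then $\EE_{P}[\log q]$ in \eqref{eq:CNV_PrimalLM} would be very negative, the metric constraint would become slack, and the optimal $s^{*}$ would collapse toward $0$, forcing $\Ptilde^{*}(x|y)\to Q^{\prime}_{x}\ge\Qmin$. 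Converting this self-regulating behaviour into the uniform constant $\frac{\log2}{\Qmin}$ is the crux of the proof; the reduction \eqref{eq:plan_key} and the appeal to Lemma \ref{lem:CNV_DiffH} are then routine.
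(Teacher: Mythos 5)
Your reduction is exactly the paper's: the shifted offset $a(x)=a^{*}(x)+\log\frac{Q'_x}{Q_x}$ is the same change of variables the paper uses to pass to \eqref{eq:CNV_DualLM_Alt}, your inequality \eqref{eq:plan_key} is the paper's chain \eqref{eq:CNV_Lim4}--\eqref{eq:CNV_Lim7}, and the appeal to Lemma \ref{lem:CNV_DiffH} for the entropy difference is identical. But you have left the one non-routine step unproven, and you have named it yourself as the crux: the bound $-\sum_{y}W_{xy}\log\Ptilde^{*}(x|y)\le\frac{\log 2}{\Qmin}$ for each $x$. The mechanism you sketch for closing it --- that if some $\Ptilde^{*}(x|y)$ were driven to zero then the metric constraint would go slack and the optimal $s^{*}$ would collapse to $0$ --- is not an argument, and it is not how the paper does it; nothing about stationarity of $s^{*}$ or of $a^{*}$ is needed, and your proposed ingredients (the marginal-matching identity and $P'_Y(y)\ge Q'_xW_{xy}$) do not obviously produce the constant.

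The actual argument is much more elementary. Writing the dual value at the (near-)optimizer as $\LM(Q')=H(Q')-\sum_{x}Q'_x A_x$ with $A_x\defeq\sum_{y}W_{xy}\log\bigl(1+\frac{q_{\xbar y}^{s^*}e^{\atilde^*(\xbar)}}{q_{xy}^{s^*}e^{\atilde^*(x)}}\bigr)=-\sum_y W_{xy}\log\Ptilde^*(x|y)\ge 0$, the assumption $\LM(Q')>0$ (the case $\LM(Q')=0$ being trivial) gives $\sum_x Q'_xA_x<H(Q')\le\log 2$; since both summands are non-negative, each satisfies $Q'_xA_x\le\log 2$, and dividing by $Q'_x\ge\Qmin$ yields $A_x\le\frac{\log 2}{\Qmin}$ --- this is \eqref{eq:CNV_Lim2}--\eqref{eq:CNV_Lim3}. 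So the ``self-regulating behaviour'' comes purely from positivity of the rate, not from any optimality condition. With that bound in hand, your tilt term is handled as in \eqref{eq:CNV_Lim6}: the $x$ with $Q_x-Q'_x\in[0,\delta]$ contributes at most $\delta\cdot\frac{\log 2}{\Qmin}$, and the other $x$ contributes a non-positive amount since $A_x\ge0$. You should also note that the supremum in \eqref{eq:CNV_DualLM} need not be attained, so the argument has to be run along a maximizing sequence $(s_k,\atilde_k)$ with a $\limsup$, as the paper does; this is cosmetic, but your ``attain (or approach)'' phrasing glosses over it.
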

\begin{proof} 
    The bound in \eqref{eq:CNV_DiffLM} is trivial when $\LM(Q^{\prime})=0$,
    so we consider the case $\LM(Q^{\prime})>0$.  
    Observing that $Q(x)>0$ for $x \in \{0,1\}$, we 
    can make the change of variable $a(x)=\log\frac{e^{\atilde(x)}}{Q(x)}$ 
    (i.e. $e^{\atilde(x)}=Q(x)e^{a(x)}$) in \eqref{eq:CNV_DualLM} to obtain
    \begin{equation}
    \LM(Q)=\sup_{s\ge0,\atilde(\cdot)}\sum_{x,y}Q(x)W(y|x)\log\frac{q(x,y)^{s}e^{\atilde(x)}}{Q(x)\sum_{\xbar}q(\xbar,y)^{s}e^{\atilde(\xbar)}}, \label{eq:CNV_DualLM_Alt}
    \end{equation}
    which can equivalently be written as
    \begin{equation}
        \LM(Q^{\prime})=H(Q^{\prime})-\inf_{s\ge0,\atilde(\cdot)}\sum_{x,y}Q^{\prime}(x)W(y|x)\log\bigg(1+\frac{q(\xbar,y)^{s}e^{\atilde(\xbar)}}{q(x,y)^{s}e^{\atilde(x)}}\bigg),\label{eq:CNV_DualLM_Alt2}
    \end{equation}
    where $\xbar\in\{0,1\}$ denotes the unique symbol differing
    from $x\in\{0,1\}$. 
    
    The following arguments can be simplified when
    the infimum is achieved, but for completeness we consider the general
    case. Let $(s_{k},\atilde_{k})$ be a sequence of parameters such
    that
    \begin{equation}
        H(Q^{\prime})-\lim_{k\to\infty}\sum_{x,y}Q^{\prime}(x)W(y|x)\log\bigg(1+\frac{q(\xbar,y)^{s_{k}}e^{\atilde_{k}(\xbar)}}{q(x,y)^{s_{k}}e^{\atilde_{k}(x)}}\bigg)=\LM(Q^{\prime}).\label{eq:CNV_Lim1}
    \end{equation}
    Since the argument to the logarithm in \eqref{eq:CNV_Lim1} is no
    smaller than one, and since $H(Q^{\prime})\le\log2$ by the assumption
    that the input alphabet is binary, we have for $x=0,1$ and sufficiently
    large $k$ that
    \begin{equation}
        \sum_y Q^{\prime}(x)W(y|x)\log\bigg(1+\frac{q(\xbar,y)^{s_{k}}e^{\atilde_{k}(\xbar)}}{q(x,y)^{s_{k}}e^{\atilde_{k}(x)}}\bigg)\le\log2,\label{eq:CNV_Lim2}
    \end{equation}
    since otherwise the left-hand side of \eqref{eq:CNV_Lim1} would be
    non-positive, in contradiction with the fact that we are considering
    the case $\LM(Q^{\prime})>0$. Using the assumption $\min\{Q_{0}^{\prime},Q_{1}^{\prime}\}\ge \Qmin$,
    we can weaken \eqref{eq:CNV_Lim2} to
    \begin{equation}
        \sum_y W(y|x)\log\bigg(1+\frac{q(\xbar,y)^{s_{k}}e^{\atilde_{k}(\xbar)}}{q(x,y)^{s_{k}}e^{\atilde_{k}(x)}}\bigg)\le\frac{\log2}{\Qmin}.\label{eq:CNV_Lim3}
    \end{equation}
    We now have the following:
    \begin{align}
        \LM(Q) & \ge H(Q)-\limsup_{k\to\infty}\sum_{x,y}Q(x)W(y|x)\log\bigg(1+\frac{q(\xbar,y)^{s_{k}}e^{\atilde_{k}(\xbar)}}{q(x,y)^{s_{k}}e^{\atilde_{k}(x)}}\bigg)\label{eq:CNV_Lim4}\\
         & \ge H(Q^{\prime})-\limsup_{k\to\infty}\sum_{x}Q(x)\sum_{y}W(y|x)\log\bigg(1+\frac{q(\xbar,y)^{s_{k}}e^{\atilde_{k}(\xbar)}}{q(x,y)^{s_{k}}e^{\atilde_{k}(x)}}\bigg)-\delta\log\frac{1-\Qmin}{\Qmin}\label{eq:CNV_Lim5}\\
         & = H(Q^{\prime})-\limsup_{k\to\infty}\sum_{x}(Q(x) + Q'(x) - Q'(x))\sum_{y}W(y|x)\log\bigg(1+\frac{q(\xbar,y)^{s_{k}}e^{\atilde_{k}(\xbar)}}{q(x,y)^{s_{k}}e^{\atilde_{k}(x)}}\bigg)-\delta\log\frac{1-\Qmin}{\Qmin}\label{eq:CNV_Lim5a}\\
         & \ge H(Q^{\prime})-\limsup_{k\to\infty}\sum_{x}Q^{\prime}(x)\sum_{y}W(y|x)\log\bigg(1+\frac{q(\xbar,y)^{s_{k}}e^{\atilde_{k}(\xbar)}}{q(x,y)^{s_{k}}e^{\atilde_{k}(x)}}\bigg)-\delta\log\frac{1-\Qmin}{\Qmin}-\frac{\delta\log2}{\Qmin}\label{eq:CNV_Lim6}\\
         & =\LM(Q^{\prime})-\delta\log\frac{1-\Qmin}{\Qmin}-\frac{\delta\log2}{\Qmin},\label{eq:CNV_Lim7}
    \end{align}
    where \eqref{eq:CNV_Lim4} follows by replacing the infimum in \eqref{eq:CNV_DualLM_Alt2}
    by the particular sequence of parameters $(s_{k},\atilde_{k})$ and
    taking the $\limsup$, \eqref{eq:CNV_Lim5} follows
    from Lemma \ref{lem:CNV_DiffH}, \eqref{eq:CNV_Lim6} follows by applying
    \eqref{eq:CNV_Lim3} for the $x$ value where $Q(x) \ge Q'(x)$ and lower
    bounding the logarithm by zero for the other $x$ value, and \eqref{eq:CNV_Lim7} 
    follows from \eqref{eq:CNV_Lim1}. 
\end{proof}

\subsection{Establishing the Upper Bound in \eqref{eq:CNV_BoundLM}}

As mentioned in the previous subsection, we optimize $Q$ by
performing a brute force search over a set of regularly spaced points,
and then using Lemma \ref{lem:CNV_DiffLM} to bound the difference
$\CLM-\LM(Q)$. We let the input distribution
therein be $Q^{\prime}=\arg\max_{Q}\LM(Q)$.  Note that this maximum
is always achieved, since $\LM$ is continuous and bounded \cite{Csiszar2}. 
If there are multiple maximizers, we choose one arbitrarily among them. 

To apply Lemma \ref{lem:CNV_DiffLM}, we need a constant $\Qmin$ such that
$\min\{Q_{0}^{\prime},Q_{1}^{\prime}\}\ge \Qmin$. 
We present a straightforward choice based on the lower bound on the left-hand
side of \eqref{eq:CNV_BoundLM} (proved in Section \ref{eq:CNV_PfLowerLM}).
By choosing $\Qmin$ such that even the mutual information $I(X;Y)$ is upper
bounded by the left-hand side of \eqref{eq:CNV_BoundLM} when 
$\min\{Q_{0}^{\prime},Q_{1}^{\prime}\}<\Qmin$,
we see from the simple identity $\LM(Q)\le I(X;Y)$ \cite{Csiszar2}
that $Q$ cannot maximize $\LM$. For the example under consideration
(see \eqref{eq:CNV_Channel}), the choice $\Qmin=0.042$
turns out to be sufficient, and in fact yields $I(X;Y)\le0.135$.
This can be verified by computing $I(X;Y)$ to be (approximately)
$0.0917$, $0.4919$ and $0.1348$ for $Q_0 = 0.042$, $Q_0 = 0.5$
and $Q_0 = 1 - 0.042$ respectively, and then using the concavity
of $I(X;Y)$ in $Q$ to handle $Q_0 \in [0,0.042) \cup (1-0.042,1]$.

Let $h\defeq10^{-5}$, and suppose that we evaluate $\LM(Q)$
for each $Q_{0}$ in the set
\begin{equation}
    \Ac\defeq\big\{ \Qmin,\Qmin+h,\dotsc,1-\Qmin-h,1-\Qmin\big\}.
\end{equation}
Since the optimal input distribution $Q^{\prime}$ corresponds to
some $Q_{0}^{\prime}\in[\Qmin,1-\Qmin]$, we conclude
that there exists some $Q_{0}\in\Ac$ such that $|Q_{0}^{\prime}-Q_{0}|\le\frac{h}{2}$.
Substituting $\delta=\frac{h}{2}=0.5\times10^{-5}$ and $\Qmin=0.042$
into \eqref{eq:CNV_DiffLM}, we conclude that 
\begin{equation}
    \max_{Q_{0}\in\Ac}\LM(Q)\ge \CLM - 0.982 \times 10^{-4}.\label{eq:CNV_DiffBound}
\end{equation}

We now describe our techniques for evaluating $\LM(Q)$
for a fixed choice of $Q$. This is straightforward in principle,
since the corresponding optimization problem is convex whether we
use the primal expression in \eqref{eq:CNV_PrimalLM} or the dual
expression in \eqref{eq:CNV_DualLM}. Nevertheless, since we need
to test a large number of $Q_{0}$ values, we make an effort to find
a reasonably efficient method.

We avoid using the dual expression in \eqref{eq:CNV_DualLM}, since
it is a \emph{maximization }problem; thus, if the final optimization
parameters obtained differ slightly from the true optimal parameters,
they will only provide a lower bound on $\LM(Q)$. 
In contrast, the result that we seek is an upper
bound. We also avoid evaluating \eqref{eq:CNV_PrimalLM} directly,
since the equality constraints in the optimization problem could, in principle,
be sensitive to numerical precision errors.

Of course, there are many ways to circumvent these problems and provide
rigorous bounds on the suboptimality of optimization procedures, including
a number of generic solvers. We instead take a different approach,
and reduce the primal optimization in \eqref{eq:CNV_BoundLM} to a
\emph{scalar minimization} problem by eliminating the constraints
one-by-one.  This minimization will contain no equality constraints,
and thus minor variations in the optimal parameter will still produce
a valid upper bound. 

We first note that the inequality constraint can be replaced by an
equality whenever $\LM(Q)>0$ \cite[Lemma 1]{Csiszar2},
which is certainly the case for the present example. Moreover, since
the $X$-marginal is constrained to equal $Q$, we can let the minimization
be over $\Pc(\Yc|\Xc)$ instead of $\Pc(\Xc \times \Yc)$, yielding
\begin{equation}
    \LM(Q)=\min_{\substack{\Wtilde\in\Pc(\Yc|\Xc)\,:\,\Ptilde_{Y}=P_{Y}\\
    \EE_{Q\times\Wtilde}[\log q(X,Y)]=\EE_{P}[\log q(X,Y)]}} I_{Q\times\Wtilde}(X;Y),\label{eq:CNV_PrimalLM_eq}
\end{equation}
where $\Ptilde_{Y}(y)\defeq\sum_{x}Q(x)\Wtilde(y|x)$ (recall
also that $P_{XY} = Q \times W$).
Let us fix a conditional distribution $\Wtilde$ satisfying
the specified constraints, and write $\Wtilde_{xy}\defeq\Wtilde(y|x)$.
The analogous matrix to $\Wv$ in \eqref{eq:CNV_Channel}
can be written as follows:
\begin{equation}
    \Wvtilde=\left[\begin{array}{ccc}
    \Wtilde_{00} & \Wtilde_{01} & 1-\Wtilde_{00}-\Wtilde_{01}\\
    \Wtilde_{10} & \Wtilde_{11} & 1-\Wtilde_{10}-\Wtilde_{11}
    \end{array}\right].
\end{equation}
Since $\Ptilde_{Y}=P_{Y}$ implies $H(\Ptilde_{Y})=H(P_{Y})$,
we can write the objective in \eqref{eq:CNV_PrimalLM_eq} as
\begin{align}
    I_{Q\times\Wtilde}(X;Y) & =H(P_{Y})-H_{Q\times\Wtilde}(Y|X)\\
        & =H(P_{Y})+Q_{0}\big(\Wtilde_{00}\log\Wtilde_{00}+\Wtilde_{01}\log\Wtilde_{01}+(1-\Wtilde_{00}-\Wtilde_{01})\log(1-\Wtilde_{00}-\Wtilde_{01})\big)\nonumber \\
        & \qquad\qquad~+Q_{1}\big(\Wtilde_{10}\log\Wtilde_{10}+\Wtilde_{11}\log\Wtilde_{11}+(1-\Wtilde_{10}-\Wtilde_{11})\log(1-\Wtilde_{10}-\Wtilde_{11})\big). \label{eq:CNV_Obj2}
\end{align}
We now show that the equality constraints can be used to express each
$\Wtilde_{xy}$ in terms of $\Wtilde_{10}$. Using $\Ptilde_{Y}(y)=P_{Y}(y)$
for $y=0,1$, along with the constraint containing the decoding metric,
we have
\begin{gather}
    Q_{0}\Wtilde_{00}+Q_{1}\Wtilde_{10}=P_{Y}(0)\label{eq:CNV_Constr1}\\
    Q_{0}\Wtilde_{01}+Q_{1}\Wtilde_{11}=P_{Y}(1)\\
    Q_{1}\big(\Wtilde_{11}\log q_{11}+(1-\Wtilde_{10}-\Wtilde_{11})\log q_{12}\big)=\EE_{P}[\log q(X,Y)],\label{eq:CNV_Constr3}
\end{gather}
where in \eqref{eq:CNV_Constr3} we used the fact that $\log q(x,y)=0$
for four of the six $(x,y)$ pairs (see \eqref{eq:CNV_Channel}).
Re-arranging \eqref{eq:CNV_Constr1}--\eqref{eq:CNV_Constr3}, we
obtain
\begin{align}
    \Wtilde_{00} & =\frac{P_{Y}(0)-Q_{1}\Wtilde_{10}}{Q_{0}}\label{eq:CNV_Constr1a}\\
    \Wtilde_{01} & =\frac{P_{Y}(1)-Q_{1}\Wtilde_{11}}{Q_{0}}\label{eq:CNV_Constr2a}\\
    \Wtilde_{11} & =\frac{1}{\log q_{11}-\log q_{12}}\bigg(\frac{\EE_{P}[\log q(X,Y)]}{Q_{1}}-(1-\Wtilde_{10})\log q_{12}\bigg),\label{eq:CNV_Constr3a}
\end{align}
and substituting \eqref{eq:CNV_Constr3a} into \eqref{eq:CNV_Constr2a}
yields
\begin{equation}
    \Wtilde_{01}=\frac{1}{Q_{0}}\Bigg(P_{Y}(1)-\frac{1}{\log q_{11}-\log q_{12}}\bigg(\EE_{P}[\log q(X,Y)]-Q_{1}(1-\Wtilde_{10})\log q_{12}\bigg)\Bigg).\label{eq:CNV_Constr4a}
\end{equation}
We have thus written each entry of \eqref{eq:CNV_Obj2} in terms of
$\Wtilde_{10}$, and we are left with a one-dimensional optimization
problem. However, we must still ensure that the constraints $\Wtilde_{xy}\in[0,1]$
are satisfied for all $(x,y)$. Since each $\Wtilde_{xy}$ is
an affine function of $\Wtilde_{10}$, these constraints are
each of the form $\underline{W}^{(x,y)}\le\Wtilde_{10}\le\overline{W}^{(x,y)}$,
and the overall optimization is given by
\begin{equation}
    \min_{\underline{W}\le\Wtilde_{10}\le\overline{W}}f(\Wtilde_{10}),\label{eq:CNV_SimplerOpt}
\end{equation}
where $f(\cdot)$ denotes the right-hand side of \eqref{eq:CNV_Obj2}
upon substituting \eqref{eq:CNV_Constr1a}, \eqref{eq:CNV_Constr3a}
and \eqref{eq:CNV_Constr4a}, and the lower and upper limits are given
by $\underline{W}\defeq\max_{x,y}\underline{W}^{(x,y)}$ and $\overline{W}\defeq\min_{x,y}\overline{W}^{(x,y)}$.
Note that the minimization region is non-empty, since $\Wvtilde=\Wv$
is always feasible. In principle one could observe $\underline{W}=\overline{W}=W_{10}$,
but in the present example we found that $\underline{W}<\overline{W}$
for every choice of $Q_{0}$ that we used.

The optimization problem in \eqref{eq:CNV_SimplerOpt} does not appear
to permit an explicit solution. However, we can efficiently compute
the solution to high accuracy using standard one-dimensional optimization
methods.  Since the convexity of any optimization problem is preserved by the 
elimination of equality constraints \cite[Sec.~4.2.4]{Convex}, and
since the optimization problem in \eqref{eq:CNV_PrimalLM_eq} is convex for
any given $Q$, we conclude that $f(\cdot)$ is a convex function.
Its derivative is easily computed by noting that
\begin{equation}
    \frac{d}{dz}(\alpha z+\beta)\log(\alpha z+\beta)=\alpha+\alpha\log(\alpha z+\beta)
\end{equation}
for all $\alpha$, $\beta$ and $z$ yielding a positive argument
to the logarithm. We can thus perform a bisection search as follows,
where $f^{\prime}(\cdot)$ denotes the derivative of $f$, and $\epsilon$
is a termination parameter:
\begin{enumerate}
    \item Set $i=0$, $\underline{W}^{(0)}=\underline{W}$ and $\overline{W}^{(0)}=\overline{W}$;
    \item Set $W_{\mathrm{mid}}=\frac{1}{2}(\underline{W}^{(i)}+\overline{W}^{(i)})$;
    if $f^{\prime}(W_{\mathrm{mid}})\ge0$ then set $\underline{W}^{(i+1)}=\underline{W}^{(i)}$
    and $\overline{W}^{(i+1)}=W_{\mathrm{mid}}$; otherwise set $\underline{W}^{(i+1)}=W_{\mathrm{mid}}$
    and $\overline{W}^{(i+1)}=\overline{W}^{(i)}$;
    \item If $|f^{\prime}(W_{\mathrm{mid}})|\le\epsilon$ then terminate; otherwise
    increment $i$ and return to Step 2.
\end{enumerate}
As mentioned previously, we do not need to find the exact solution to \eqref{eq:CNV_SimplerOpt},
since any value of $\Wtilde_{10}\in[\underline{W},\overline{W}]$
yields a valid upper bound on $\LM(Q)$. However, we must
choose $\epsilon$ sufficiently small so that the bound in \eqref{eq:CNV_BoundLM}
is established. We found $\epsilon=10^{-6}$ to suffice.

We implemented the preceding techniques in C (see \cite{PaperBI_Code} 
for the code) to upper bound $\LM(Q)$
for each $Q_{0}\in\Ac$; see Figure \ref{fig:CNV_MainPlot}.  As stated
following Counter-Example \ref{prop:CNV_MainResult}, we found the highest
value of $\LM(Q)$ to be the right-hand side of \eqref{eq:CNV_ApproxLM},
corresponding to the input distribution in \eqref{eq:CNV_OptQ}.  We found
the corresponding minimizing parameter in \eqref{eq:CNV_SimplerOpt} to 
be roughly $\Wtilde_{10} = 0.4252347$.

Instead of directly adding $10^{-4}$ to \eqref{eq:CNV_ApproxLM} in
accordance with \eqref{eq:CNV_DiffBound}, we obtain a refined estimate
by ``updating'' our estimate of $\Qmin$.  Specifically, using 
\eqref{eq:CNV_DiffBound} and observing the values in Figure 
\ref{fig:CNV_MainPlot}, we can conclude that the optimal value of
$Q_0$ lies in the range $[0.7,0.8]$ (we are being highly conservative
here).  Thus, setting $\Qmin=0.2$ and using the previously chosen 
value $\delta = 0.5 \times 10^{-5}$, we obtain the following refinement
of \eqref{eq:CNV_DiffBound}:
\begin{equation}
    \max_{Q_{0}\in\Ac}\LM(Q)\ge \CLM - 2.43 \times 10^{-5}. \label{eq:CNV_DiffBound2}
\end{equation}

Since our implementation in C is based on floating-point calculations, the
final values may have precision errors.  We therefore checked our numbers using Mathematica's 
arbitrary-precision arithmetic framework \cite{MathematicaPrecision},
which allows one to work with \emph{exact} expressions that can then be
displayed to arbitrarily many decimal places.  More precisely, we loaded
the values of $\Wtilde_{10}$ into Mathematica and rounded them to 12 decimal
places (this is allowed, since any value of $\Wtilde_{10}$ yields a valid 
upper bound).  Using the exact values of all other quantities (e.g.~$Q$ and $W$),
we performed an evaluation of $f(\Wtilde_{10})$ in \eqref{eq:CNV_SimplerOpt},
and compared it to the corresponding value of $\LM(Q)$ produced by the C 
program.  The maximum discrepancy across all of the values of $Q_0$ was 
less than $2.1 \times 10^{-12}$.  
Our final bound in \eqref{eq:CNV_BoundLM} was obtained by adding
$2.5 \times 10^{-5}$ (which is, of course, higher than 
$2.43 \times 10^{-5} + 2.1 \times 10^{-12}$) to 
the right-hand side of \eqref{eq:CNV_ApproxLM}. 

\subsection{Establishing the Lower Bound in \eqref{eq:CNV_BoundLM}} \label{eq:CNV_PfLowerLM}

For the lower bound, we can afford to be less careful than we were in establishing
the upper bound; all we need is a suitable choice of $Q$
and the parameters $(s,a)$ in \eqref{eq:CNV_DualLM}.  We choose $Q$ as in \eqref{eq:CNV_OptQ}, 
along with the following:
\begin{align}
    s &= 9.031844\\
    \av & =\left[\begin{array}{cccc}
    0.355033 & -0.355033
    \end{array}\right],
\end{align}
In Appendix \ref{sec:CNV_FURTHER_TECHNIQUES}, we provide details on how these
parameters were obtained, though the desired lower bound can readily be verified
without knowing such details.

Using these values, we evaluated the objective in \eqref{eq:CNV_DualLM} using
Mathematica's arbitrary-precision arithmetic framework \cite{MathematicaPrecision},
thus eliminating the possibility of arithmetic precision errors.
See \cite{PaperBI_Code} for the relevant C and Mathematica code.

\subsection{Establishing the Lower Bound in \eqref{eq:CNV_BoundSC2}} \label{sub:CNV_SC_EVAL}

We establish the lower bound in \eqref{eq:CNV_BoundSC2} by setting
$\Uc=\{0,1\}$ and forming a suitable choice of $Q_{UX}$,
and then using the dual expressions in \eqref{eq:SC_R1_Dual}--\eqref{eq:SC_Rsum_Dual}
to lower bound $\ISC^{(2)}(Q_{UX})$. 

\subsubsection{Choice of Input Distribution}

Let $Q=[Q_{0}\,\, Q_{1}]$ be some input distribution on $\Xc$,
and define the corresponding product distribution on $\Xc^2$ as
\begin{equation}
    Q^{(2)}=\left[\begin{array}{cccc}
    Q_{0}^{2} & Q_{0}Q_{1} & Q_{0}Q_{1} & Q_{1}^{2}
    \end{array}\right],\label{eq:CNV_OptQ2}
\end{equation}
where the order of the inputs is $(0,0)$, $(0,1)$, $(1,0)$, $(1,1)$.
Consider now the following choice of superposition coding parameters
for the second-order product channel $(W^{(2)},q^{(2)})$:
\begin{align}
    Q_{U} & =\left[\begin{array}{cc}
    1-Q_{1}^{2} & Q_{1}^{2}
    \end{array}\right]\label{eq:CNV_ChoiceQU}\\
    Q_{X|U=0} & =\frac{1}{1-Q_{1}^{2}}\left[\begin{array}{cccc}
    Q_{0}^{2} & Q_{0}Q_{1} & Q_{0}Q_{1} & 0
    \end{array}\right] \label{eq:CNV_ChoiceQX1} \\
    Q_{X|U=1} & =\left[\begin{array}{cccc} 
    0 & 0 & 0 & 1
    \end{array}\right].\label{eq:CNV_ChoiceQX2}
\end{align}
This choice yields an $X$-marginal $Q_{X}$ precisely given by \eqref{eq:CNV_OptQ2},
and it is motivated by the empirical observation from \cite{JournalMU}
that choices of $Q_{UX}$ where $Q_{X|U=1}$ and $Q_{X|U=2}$ have
disjoint supports tend to provide good rates.  We let
the single-letter distribution $Q=[Q_{0}\,\, Q_{1}]$ be 
\begin{equation}
    Q=\left[\begin{array}{cc}
    0.749 & 0.251
    \end{array}\right].\label{eq:CNV_OptQ_SC} 
\end{equation}
which we chose based on a simple brute force search (see Figure \ref{fig:CNV_MainPlot}).  
Note that this choice is similar to that in \eqref{eq:CNV_OptQ}, 
but not identical.

One may question whether the choice of the supports of $Q_{X|U=0}$
and $Q_{X|U=1}$ in \eqref{eq:CNV_ChoiceQX1}--\eqref{eq:CNV_ChoiceQX2}
is optimal.  For example, a similar construction might
set $Q_U(0) = Q_0^2 + Q_0Q_1$, and then replace \eqref{eq:CNV_ChoiceQX1}--\eqref{eq:CNV_ChoiceQX2}
by normalized versions of $[Q_0^2~Q_0Q_1~0~0]$ and $[0~0~Q_0Q_1~Q_1^2]$.
However, after performing a brute force search over the possible support
patterns (there are no more than $2^4$, and many can be ruled out
by symmetry considerations), we found the above pattern to be the
only one to give an improvement on $\LM$, at least for the choices of input
distribution in \eqref{eq:CNV_OptQ} and \eqref{eq:CNV_OptQ_SC}.  
In fact, even after setting
$|\Uc|=3$, considering the third-order product channel $(W^{(3)},q^{(3)})$,
and performing a similar brute force search over the support patterns 
(of which there are no more than $3^8$), we were unable to obtain
an improvement on \eqref{eq:CNV_BoundSC2}.

\subsubsection{Choices of Optimization Parameters}

We now specify the choices of the dual parameters in \eqref{eq:SC_R1_Dual}--\eqref{eq:SC_Rsum_Dual}.
In Appendix \ref{sec:CNV_FURTHER_TECHNIQUES}, we give details of how these parameters
were obtained. We claim that the choice
\begin{equation}
    (R_{0},R_{1})=(0.0356005,0.2403966)\label{eq:CNV_R1R2}
\end{equation}
is permitted; observe that summing these two values and dividing by
two (since we are considering the product channel) yields \eqref{eq:CNV_BoundSC2}.
These values can be verified by setting the parameters as follows:
On the right-hand side of \eqref{eq:SC_R1_Dual}, set 
\begin{align}
    s & =9.4261226 \label{eq:CNV_s1} \\
    \av & =\left[\begin{array}{cccc}
    0.4817048 & -0.2408524 & -0.2408524 & 0\\
    0 & 0 & 0 & 0
    \end{array}\right], \label{eq:CNV_a1}
\end{align}
and on the right-hand side of \eqref{eq:SC_Rsum_Dual}, set
\begin{align}
    \rho_{1} &= 0.7587516 \\
    s &= 9.3419338 \label{eq:CNV_s2} \\
    \av & =\left[\begin{array}{cccc}
    0.7186926 & -0.0488036 & -0.0488036 & 0\\
    0 & 0 & 0 & -0.6210855
    \end{array}\right].\label{eq:CNV_a2}
\end{align}
Once again, we evaluated \eqref{eq:SC_R1_Dual}--\eqref{eq:SC_Rsum_Dual} using
Mathematica's arbitrary-precision arithmetic framework \cite{MathematicaPrecision},
thus ensuring the validity of \eqref{eq:CNV_BoundSC2}.
See \cite{PaperBI_Code} for the relevant C and Mathematica code.
 
\section{Conclusion and Discussion}

We have used our numerical findings, along with an analysis of the
gap to suboptimality for slightly suboptimal input distributions,
to show that it is possible for $\CM$ to exceed $\CLM$
even for binary-input mismatched DMCs. This is in contrast with the
claim in \cite{ConverseMM} that $\CM=\CLM$
for such channels.

An interesting direction for future research is to find a purely theoretical
proof of Counter-Example \ref{prop:CNV_MainResult}; the
non-concavity of $\LM(Q)$ observed in Figure \ref{fig:CNV_MainPlot} may play
a role in such an investigation. Furthermore, it would be of significant
interest to develop a better understanding of \cite{ConverseMM},
including which parts may be incorrect, under what conditions the
converse remains valid, and in the remaining cases, whether a valid
converse lying in between the LM rate and matched capacity can be inferred.

\appendices

\section{Derivations of the Superposition Coding Rates} \label{sec:CNV_DERIVATIONS}

Here we outline how the superposition coding rates in \eqref{eq:CNV_R1SC}--\eqref{eq:SC_Rsum_Dual}  
are obtained using the techniques of \cite{MMSomekh,JournalMU}.  The equivalence of the primal and dual formulations
can also be proved using techniques presented in \cite{JournalMU}.

\subsection{Preliminary Definitions and Results}

The parameters of the random-coding ensemble are a finite auxiliary alphabet $\Uc$, an auxiliary 
codeword distribution $P_{\Uv}$, and a conditional codeword distribution $P_{\Xv|\Uv}$. 
An auxiliary codebook $\{\Uv^{(i)}\}_{i=1}^{M_{0}}$ 
with $M_{0}$ codewords is generated, with each auxiliary codeword
independently distributed according to $P_{\Uv}$.  For each $i=1,\dotsc,M_{0}$, a codebook
$\{\Xv^{(i,j)}\}_{j=1}^{M_{1}}$ with $M_{1}$ codewords is
generated, with each codeword conditionally independently distributed according 
to $P_{\Xv|\Uv}(\cdot|\Uv^{(i)})$. 
The message $m$ at the input to the encoder is indexed as
$(m_{0},m_{1})$, and for any such pair, the corresponding transmitted codeword
is $\Xv^{(m_{0},m_{1})}$.  Thus, the overall number of messages is $M=M_{1}M_{2}$,
yielding a rate of $R=R_{1}+R_{2}$.  More compactly, we have
\begin{equation}
    \bigg\{\Big(\Uv^{(i)},\{\Xv^{(i,j)}\}_{j=1}^{M_{1}}\Big)\bigg\}_{i=1}^{M_{0}} \sim \prod_{i=1}^{M_{0}}\bigg(P_{\Uv}(\uv^{(i)})\prod_{j=1}^{M_{1}}P_{\Xv|\Uv}(\xv^{(i,j)}|\uv^{(i)})\bigg).
\end{equation}
We assume without loss of generality that message $(1,1)$
is transmitted, and we write $\Uv$ and $\Xv$ in place of
$\Uv^{(1)}$ and $\Xv^{(1,1)}$ respectively. We write 
$\Xvtilde$ to denote an arbitrary codeword $\Xv^{(1,j)}$ with $j\ne1$.
Moreover, we let $\Uvbar$ denote an arbitrary auxiliary codeword
$\Uv^{(i)}$ with $i \ne 1$, let $\Xvbar^{(j)}$ ($j=1,\cdots,M_1$) denote the corresponding
codeword $\Xv^{(i,j)}$, and let $\Xvbar$ denote $\Xvbar^{(j)}$ for
an arbitrary choice of $j$. Thus, defining $\Yv$ to be the channel
output, we have
\begin{equation}
    (\Uv,\Xv,\Yv,\Xvtilde,\Uvbar,\Xvbar) \sim P_{\Uv}(\uv)P_{\Xv|\Uv}(\xv|\uv)W^{n}(\yv|\xv)P_{\Xv|\Uv}(\xvtilde|\uv)P_{\Uv}(\uvbar)P_{\Xv|\Uv}(\xvbar|\uvbar). \label{eq:SC_VecDistr}
\end{equation}
The decoder estimates $\hat{m}=(\hat{m}_{0},\hat{m}_{1})$
according to the decoding rule in \eqref{eq:CNV_DecodingRule}. We define the following error events:
\begin{samepage}
\begin{tabbing}
    ~~~{\emph{(Type 0)}}~~~ \= $q^n(\Xv^{(i,j)},\Yv) \ge q^n(\Xv,\Yv)$ for some $i \ne 1$, $j$; \\
    ~~~{\emph{(Type 1)}}~~~ \> $q^n(\Xv^{(1,j)},\Yv) \ge q^n(\Xv,\Yv)$ for some $j \ne 1$.
\end{tabbing}
\end{samepage}
The probabilities of these events are
denoted by $\peobar(n,M_{0},M_{1})$ and $\peibar(n,M_{1})$ respectively.
The overall random-coding error probability $\pebar(n,M_{0},M_{1})$ 
clearly satisfies $\pebar \le \peobar + \peibar$.

We begin by deriving the following non-asymptotic bounds:
\begin{gather}
    \peobar(n,M_{0},M_{1}) \le \EE\left[\min\left\{ 1,(M_{0}-1)\EE\Bigg[\min\bigg\{1,M_{1}\PP\bigg[\frac{q^{n}(\Xvbar,\Yv)}{q^{n}(\Xv,\Yv)}\ge1 \,\Big|\, \Uvbar\bigg]\bigg\}\,\bigg|\,\Uv,\Xv,\Yv\Bigg]\right\} \right] \label{eq:SC_Bound0} \\
    \peibar(n,M_{1}) \le \EE\Bigg[\min\bigg\{1,(M_1-1)\PP\bigg[\frac{q^{n}(\Xvtilde,\Yv)}{q^{n}(\Xv,\Yv)} \ge 1 \,\Big|\, \Uv,\Xv,\Yv\bigg]\bigg\}\Bigg]. \label{eq:SC_Bound1}         
\end{gather}
We will see that \eqref{eq:SC_Bound0} recovers the rate conditions in \eqref{eq:CNV_RsumSC} and
\eqref{eq:SC_Rsum_Dual}, whereas \eqref{eq:SC_Bound1} recovers those in \eqref{eq:CNV_R1SC} and
\eqref{eq:SC_R1_Dual}.  We focus on the type-0 event throughout the appendix,
since the type-1 event is simpler, and is handled using standard techniques associated with
the case of independent codewords.

To derive \eqref{eq:SC_Bound0}, we first note that
\begin{equation}
    \peobar = \PP\bigg[\bigcup_{i\ne1,j\ne1}\Big\{\frac{q^{n}(\Xv^{(i,j)},\Yv)}{q^{n}(\Xv,\Yv)} \ge 1\Big\}\bigg].
\end{equation}
Writing the probability as an expectation given $(\Uv,\Xv,\Yv)$ and applying the truncated union
bound to the union over $i$, we obtain
\begin{equation}
    \peobar \le \EE\Bigg[\min\bigg\{1,(M_0-1)\PP\bigg[\bigcup_{j\ne1}\Big\{\frac{q^{n}(\Xvbar^{(j)},\Yv)}{q^{n}(\Xv,\Yv)} \ge 1\Big\} \,\bigg|\, \Uv,\Xv,\Yv\bigg]\bigg\}\Bigg].
\end{equation}
Applying the same argument to the union over $j$, we obtain the desired upper bound.

Before proceeding, we introduce some standard notation and terminology associated
with the method of types (e.g. see \cite[Ch. 2]{CsiszarBook}).  For a given joint  
type, say $\Ptilde_{UX}$, the type class $T^{n}(\Ptilde_{UX})$ is 
defined to be the set of all sequences in $\Uc^n \times \Xc^{n}$ with type $\Ptilde_{UX}$.  Similarly,
for a given joint type $\Ptilde_{UX}$ and sequence $\uv \in T^{n}(\Ptilde_{U})$, the conditional type
class $T_{\uv}^{n}(\Ptilde_{UX})$ is defined to be the set of all sequences
$\xv$ such that $(\uv,\xv) \in T^{n}(\Ptilde_{UX})$.

In the remainder of the appendix, we consider the constant-composition ensemble, described by
\begin{align}
    P_{\Uv}(\uv)         &= \frac{1}{|T^{n}(Q_U)|} \openone\Big\{\uv\in T^{n}(Q_U)\Big\} \label{eq:SC_DistrU_CC} \\ 
    P_{\Xv|\Uv}(\xv|\uv) &= \frac{1}{|T_{\uv}^{n}(Q_{X|U})|} \openone\Big\{\xv\in T_{\uv}^{n}(Q_{X|U})\Big\}. \label{eq:SC_DistrX_CC}
\end{align}
Here we have assumed that $Q_{UX}$ is a joint type for notational convenience; more generally, we 
can approximate $Q_{UX}$ by a joint type and the analysis is unchanged.

\subsection{Derivation of the Primal Expression}

We will derive \eqref{eq:CNV_RsumSC} by showing that the 
error exponent of $\peobar$ (i.e. $\liminf_{n\to\infty}-\frac{1}{n}\log\peobar$) is lower bounded by
\begin{multline}
    \Eorcc(Q_{UX},R_{0},R_{1})\defeq\min_{P_{UXY} \,:\, P_{UX}=Q_{UX}} \min_{\substack{\Ptilde_{UXY} \,:\,\Ptilde_{UX}=Q_{UX},\,\Ptilde_{Y}=P_{Y} \\ \EE_{\Ptilde}[\log q(X,Y)]\ge\EE_{P}[\log q(X,Y)]}} \\
        D(P_{UXY}\|Q_{UX}\times W)+\Big[I_{\Ptilde}(U;Y)+\big[I_{\Ptilde}(X;Y|U)-R_{1}\big]^{+}-R_{0}\Big]^{+}. \label{eq:SC_Er0_CC}
\end{multline}
The objective is always positive when $P_{UXY}$ is bounded away from $Q_{UX} \times W$.
Hence, and by applying a standard continuity argument as in \cite[Lemma1]{Csiszar1},
we may substitute $P_{UXY} = Q_{UX} \times W$ to obtain the desired rate condition in \eqref{eq:CNV_RsumSC}.

We obtain \eqref{eq:SC_Er0_CC} by analyzing \eqref{eq:SC_Bound0} using the method of types. 
Except where stated otherwise, it should be understood that unions, summations, and minimizations
over joint distributions (e.g. $P_{UXY}$) are only over joint types, rather than being over
all distributions.

Let us first condition on $\Uv=\uv$, $\Xv=\xv$, $\Yv=\yv$ and $\Uvbar=\uvbar$ being fixed sequences, and
let $P_{UXY}$ and $\Phat_{UY}$ respectively denote the joint types of $(\uv,\xv,\yv)$
and $(\uvbar,\yv)$.  We can write the inner probability in \eqref{eq:SC_Bound0} as
\begin{align}
    \PP\bigg[\bigcup_{\substack{\Ptilde_{UXY} \,:\,\Ptilde_{UX}=Q_{UX},\,\Ptilde_{UY}=\Phat_{UY} \\ \EE_{\Ptilde}[\log q(X,Y)]\ge\EE_{P}[\log q(X,Y)]}}\Big\{ (\uvbar,\Xvbar,\yv) \in T^n(\Ptilde_{UXY}) \Big\} \,\Big|\, \Uvbar=\uvbar \bigg]. \label{eq:CNV_PrimalTerm1}
\end{align}
Note that the constraint $\Ptilde_{UX}=Q_{UX}$ arises
since every $(\uv,\xv)$ pair has joint type $Q_{UX}$ by construction.
Applying the union bound, the property of types 
$\PP[(\uvbar,\Xvbar,\yv) \in T^n(\Ptilde_{UXY}) \,|\, \Uvbar=\uvbar] \le e^{-n I_{\Ptilde}(X;Y|U)}$ \cite[Ch.~2]{CsiszarBook},
and the fact that the number of joint types is polynomial in $n$, we see that 
the negative normalized (by $\frac{1}{n}$) logarithm of 
\eqref{eq:CNV_PrimalTerm1} is lower bounded by
\begin{equation}
    \min_{\substack{\Ptilde_{UXY} \,:\,\Ptilde_{UX}=Q_{UX},\,\Ptilde_{UY}=\Phat_{UY} \\ \EE_{\Ptilde}[\log q(X,Y)]\ge\EE_{P}[\log q(X,Y)]}} I_{\Ptilde}(X;Y|U) \label{eq:CNV_PrimalExp1}
\end{equation} 
plus an asymptotically vanishing term.

Next, we write the inner expectation in \eqref{eq:SC_Bound0} (conditioned on $\Uv=\uv$, $\Xv=\xv$ and $\Yv=\yv$) as
\begin{multline}
    \sum_{\Phat_{UY} \,:\, \Phat_{U} = Q_{U},\Phat_{Y}=P_{Y}} \PP\Big[(\Uvbar,\yv) \in T^n(\Phat_{UY}) \Big]\min\bigg\{1, \\
     M_1\PP\bigg[\bigcup_{\substack{\Ptilde_{UXY} \,:\,\Ptilde_{UX}=Q_{UX},\,\Ptilde_{UY}=\Phat_{UY} \\ \EE_{\Ptilde}[\log q(X,Y)]\ge\EE_{P}[\log q(X,Y)]}}\Big\{ (\uvbar,\Xvbar,\yv) \in T^n(\Ptilde_{UXY}) \Big\} \,\Big|\, \Uvbar=\uvbar \bigg]\bigg\}, \label{eq:CNV_PrimalTerm2}
\end{multline}
where now $\uvbar$ denotes an arbitrary sequence such that $(\uv,\yv) \in T^n(\Phat_{UY})$.
Combining \eqref{eq:CNV_PrimalExp1} with the property of types 
$\PP[(\Uvbar,\yv) \in T^n(\Phat_{UY}) ] \le e^{-n I_{\Phat}(U;Y)}$ \cite[Ch.~2]{CsiszarBook}, we see that
the negative normalized logarithm of \eqref{eq:CNV_PrimalTerm2} is lower bounded by
\begin{equation}
    \min_{\Phat_{UY} \,:\, \Phat_{U} = Q_{U},\Phat_{Y}=P_{Y}} I_{\Phat}(U;Y) + \min_{\substack{\Ptilde_{UXY} \,:\,\Ptilde_{UX}=Q_{UX},\,\Ptilde_{UY}=\Phat_{UY} \\ \EE_{\Ptilde}[\log q(X,Y)]\ge\EE_{P}[\log q(X,Y)]}} \big[I_{\Ptilde}(X;Y|U) - R_1\big]^+ \label{eq:CNV_PrimalExp2}
\end{equation} 
plus an asymptotically vanishing term.
Combining the two minimizations into one via the constraint $\Ptilde_{UY}=\Phat_{UY}$, 
we see that the right-hand side of \eqref{eq:CNV_PrimalExp2} coincides with
that of \eqref{eq:CNV_RsumSC} (note, however, that $P_{UXY}$ need not equal 
$Q_{UX} \times W$ at this stage).  Finally, the derivation
of \eqref{eq:SC_Er0_CC} is concluded by handling the outer expectation in \eqref{eq:SC_Bound0} in 
the same way as the inner one, applying the property $\PP[(\Uv,\Xv,\Yv) \in T^n(P_{UXY})] \le e^{-n D(P_{UXY}\|Q_{UX}\times W)}$ \cite[Ch.~2]{CsiszarBook}
(which follows since $(\Uv,\Xv)$ is uniform on $T^n(Q_{UX})$),
and expanding the minimization set from joint types to general joint distributions.

\subsection{Derivation of the Dual Expression}

Expanding \eqref{eq:SC_Bound0} and applying Markov's
inequality and $\min\{1,\alpha\}\le\alpha^{\rho}$ ($\rho\in[0,1]$),
we obtain 
\begin{equation}
    \peobar \le \sum_{\uv,\xv,\yv}P_{\Uv}(\uv)P_{\Xv|\Uv}(\xv|\uv) W^{n}(\yv|\xv)
        \Bigg(M_{0}\sum_{\uvbar}P_{\Uv}(\uvbar)\bigg(M_{1}\frac{\sum_{\xvbar}P_{\Xv|\Uv}(\xvbar|\uvbar)q^{n}(\xvbar,\yv)^{s}}{q^{n}(\xv,\yv)^{s}}\bigg)^{\rho_{1}}\Bigg)^{\rho_{0}}. \label{eq:CNV_DualDeriv1}    
\end{equation}
for any $\rho_{0}\in[0,1]$, $\rho_{1}\in[0,1]$ and $s\ge0$.  Let $a(u,x)$ be an
arbitrary function on $\Uc\times\Xc$, and let $a^n(\uv,\xv) \triangleq \sum_{i=1}^{n}a(u_i,x_i)$
be its additive $n$-letter extension.  Since $(\Uv,\Xv)$ and $(\Uvbar,\Xvbar)$ have the
same joint type (namely, $Q_{UX}$) by construction, we have $a^n(\uv,\xv) = a^n(\uvbar,\xvbar)$,
and hence we can write \eqref{eq:CNV_DualDeriv1} as
\begin{equation}
    \peobar \le \sum_{\uv,\xv,\yv}P_{\Uv}(\uv)P_{\Xv|\Uv}(\xv|\uv) W^{n}(\yv|\xv)
        \Bigg(M_{0}\sum_{\uvbar}P_{\Uv}(\uvbar)\bigg(M_{1}\frac{\sum_{\xvbar}P_{\Xv|\Uv}(\xvbar|\uvbar)q^{n}(\xvbar,\yv)^{s}e^{a^n(\uvbar,\xvbar)}}{q^{n}(\xv,\yv)^{s}e^{a^n(\uv,\xv)}}\bigg)^{\rho_{1}}\Bigg)^{\rho_{0}}. \label{eq:CNV_DualDeriv1}    
\end{equation}
Upper bounding each constant-composition distribution by a polynomial factor times 
the corresponding i.i.d. distribution (i.e.~$P_{\Uv}(\uv) \le (n+1)^{|\Uc|-1}\prod_{i=1}^{n}Q_U(u_i)$ \cite[Ch.~2]{CsiszarBook}),
we see the exponent of $\peobar$ is lower bounded by
\begin{equation}
    \max_{\rho_0\in[0,1],\rho_1\in[0,1]} E_0(Q_{UX},\rho_0,\rho_1) - \rho_0(R_0 + \rho_1 R_1),
\end{equation}
where
\begin{equation}
    E_0(Q_{UX},\rho_0,\rho_1) \triangleq \sup_{s\ge0,a(\cdot,\cdot)} -\log\sum_{u,x}Q_{UX}(u,x)W(y|x)\Bigg( \sum_{\ubar}Q_{U}(\ubar)\bigg(\frac{\sum_{\xbar}Q_{X|U}(\xbar|\ubar)q(\xbar,y)^s e^{a(\ubar,\xbar)}}{q(x,y)^s e^{a(u,x)}}\bigg)^{\rho_1} \Bigg)^{\rho_0}. \label{eq:CNV_E0}
\end{equation}
We obtain \eqref{eq:SC_Rsum_Dual} in the same way as Gallager's single-user analysis \cite[Sec. 5.6]{Gallager}
by evaluating the partial derivative of the objective in \eqref{eq:CNV_E0} at $\rho_0=0$ (see also 
\cite{Variations} for the corresponding approach to deriving the LM rate).

\section{Further Numerical Techniques Used \label{sec:CNV_FURTHER_TECHNIQUES}}

In this section, we present further details
of our numerical techniques for the sake of reproducibility.
Except where stated otherwise, the implementations were done
in C.  Our code is available online \cite{PaperBI_Code}.

The algorithms here do not play a direct role in 
establishing Counter-Example \ref{prop:CNV_MainResult}.  We thus
resort to ``ad-hoc'' approaches with manually-tuned
parameters.  In particular, we make no claims regarding the
convergence of these algorithms or their effectiveness in 
handling channels and decoding metrics differing from
Counter-Example \ref{prop:CNV_MainResult}.

\subsection{Evaluating $\LM(Q)$ via the Dual Expression and
Gradient Descent}

Here we describe how we optimized the parameters in \eqref{eq:CNV_DualLM}
for a fixed input distribution $Q$ to produce the dual values plotted in
Figure \ref{fig:CNV_MainPlot}. Note that replacing the optimization
by fixed values leads to a lower bound, whereas for the primal expression
it led to an upper bound. Thus, since the two are very close in Figure
\ref{fig:CNV_MainPlot}, we can be assured that the true value of
$\LM(Q)$ has been characterized accurately, at least
for the values of $Q$ shown. While we focus on the binary-input setting
here, the techniques can be applied to an arbitrary mismatched DMC.
For brevity, we write $a_{x}\defeq a(x)$.

Let $Q$ be given, and let $I(\vv)$ be the corresponding 
objective in \eqref{eq:CNV_DualLM} as a function of $\vv\defeq[s\,\, a_{0}\,\, a_{1}]^{T}$.
Moreover, let $\nabla I(\vv)$
denote the $3\times1$ corresponding gradient vector containing the
partial derivatives of $I(\cdot)$. These are all easily evaluated in
closed form by a direct differentiation. We used the following standard
gradient descent algorithm, which depends on the initial values $(s^{(0)},a_{0}^{(0)},a_{1}^{(0)})$,
a sequence of step sizes $\{t^{(i)}\}$, and a termination parameter $\epsilon$:
\begin{enumerate}
\item Set $i=0$ and initialize $\vv^{(0)}=[s^{(0)}\,\, a_{0}^{(0)}\,\, a_{1}^{(0)}]^{T}$;
\item Set $\vv^{(i+1)}=\vv^{(i)}-t^{(i)}\nabla I(\vv^{(i)})$;
\item If $\|\nabla I(\vv^{(i+1)})\|\le\epsilon$ then terminate;
otherwise, increment $i$ and return to Step 2.
\end{enumerate}
We used the initial parameters $(s^{(0)},a_{0}^{(0)},a_{1}^{(0)})=(1,0,0)$,
a fixed step size $t^{(i)}=1$, and a termination parameter $\epsilon=10^{-6}$.

Note that we have ignored the constraint $s \ge 0$, but this has no effect
on the maximization.  This is seen by noting that $s$ is a Lagrange multiplier
corresponding to the constraint on the metric in \eqref{eq:CNV_PrimalLM},
and the inequality therein can be replaced by an equality as long as
$\LM(Q) > 0$ \cite[Lemma 1]{Csiszar2}.  The equality constraint yields
a Lagrange multiplier on $\RR$, rather than $\RR_{+}$.

\subsection{Evaluating $\ISC^{(2)}(Q_{UX})$ via the Dual Expression
and Gradient Descent}

To obtain the dual curve for $\ISC^{(2)}$ in Figure \ref{fig:CNV_MainPlot},
we optimized the parameters in \eqref{eq:SC_R1_Dual}--\eqref{eq:SC_Rsum_Dual}
in a similar fashion to the previous subsection.  In fact,
the optimization of \eqref{eq:SC_R1_Dual} was done exactly as above,
with the same initial parameters (i.e. initializing $s=1$ and $a(u,x)=0$
for all $(u,x)$).  By letting \eqref{eq:SC_R1_Dual} hold with equality,
the solution to this optimization gives a value for $R_1$.

Handling the optimization in \eqref{eq:SC_Rsum_Dual} was less 
straightforward. We were unable to verify the \emph{joint} 
concavity of the objective in $(\rho_{1},s,a)$, and we in fact found a naive
application of gradient descent to be problematic due to overly
large changes in $\rho_1$ on each step.  Moreover, while it is safe to ignore
the constraint $s\ge0$ in the same way as the previous subsection,
the same is not true of the constraint $\rho_1 \in [0,1]$.
We proceed by presenting a modified algorithm that handles these issues.

Similarly to the previous subsection, we let $\vv$ be the vector
of parameters, let $I_0(\vv)$ denote the objective in \eqref{eq:SC_Rsum_Dual}
with $Q_{UX}$ and $R_1$ fixed (the latter chosen as the value
given by the evaluation of  \eqref{eq:SC_R1_Dual}), and let $\nabla I_0(\vv)$ be the 
corresponding gradient vector.  Moreover, we define
\begin{equation}
    \Phi(\rho_1) \triangleq \begin{cases} 0 & \rho_1 < 0 \\ \rho_1 & \rho_1 \in [0,1] \\ 1 & \rho_1 > 1. \end{cases}
\end{equation}
Finally, we let $\vv_{{-\rho_1}}$ denote the vector
$\vv$ with the entry corresponding to $\rho_1$ removed,
and similarly for other vectors (e.g. $(\nabla I_0(\vv))_{{-\rho_1}}$).

We applied the following variation of gradient descent, which depends
on the initial parameters, the step sizes $\{t^{(i)}\}$, and two
parameters $\epsilon$ and $\epsilon'$:
\begin{enumerate}
\item Set $i=0$ and initialize $\vv^{(0)}$.
\item Set $\vv_{-\rho_1}^{(i+1)} = \vv_{-\rho_1}^{(i)}-t^{(i)}(\nabla I(\vv^{(i)}))_{-\rho_1}$.
\item If $\|(\nabla I_0(\vv^{(i)}))_{{-\rho_1}}\| \le \epsilon'$ then 
set $\rho_1^{(i+1)} = \Phi\big(\rho_1^{(i)} - t^{(i)} \frac{\partial I_0}{\partial \rho_1}\big|_{\vv = \vv^{(i)}}\big)$; 
otherwise set $\rho_1^{(i+1)} = \rho_1^{(i)}$.
\item Terminate if either of the following conditions hold: 
(i) $\|\nabla I(\vv^{(i+1)})\|\le\epsilon$; (ii) $\|(\nabla I_0(\vv^{(i+1)}))_{{-\rho_1}}\| \le \epsilon$
and $\rho_1^{(i+1)} \in \{0,1\}$.  Otherwise, increment $i$ and return to Step 2.
\end{enumerate}
In words, $\rho_1$ is only updated if the norm of the gradient corresponding
to $(s,a)$ is sufficiently small, and the algorithm may terminate when $\rho_1$
saturates to one of the two endpoints of $[0,1]$ (rather than arriving at a
local maximum).  
We initialized $s$ and $\rho_1$ to $1$, and each $a(u,x)$ to zero.  We again
used a constant step size $t^{(i)}=1$, and we chose the parameters
$\epsilon = 10^{-6}$ and $\epsilon' = 10^{-2}$. 

\subsection{Evaluating $\ISC^{(2)}(Q_{UX})$ via the Primal Expression}

Since we only computed the primal expression for $\ISC^{(2)}(Q_{UX})$ with a relatively small
number of input distributions (namely, those shown in Figure \ref{fig:CNV_MainPlot}),
computational complexity was a minor issue, so we resorted to the general-purpose software
CVX for MATLAB \cite{CVX}. In the same way as the previous subsection, we solved
the right-hand side of \eqref{eq:CNV_R1SC} to find $R_{1}$, then
substituted the resulting value into \eqref{eq:CNV_RsumSC} to find $R_{0}$.

\section{Achievability of \eqref{eq:CNV_BoundSC2} via Expurgated Parallel Coding} \label{sec:CNV_MAC_RATE}

Here we outline how the achievable rate of $0.137998$ nats/use in \eqref{eq:CNV_BoundSC2} can be 
obtained using Lapidoth's expurgated parallel coding rate.  We verified this
value by evaluating the primal expressions in \cite{MacMM} using CVX \cite{CVX},
and also by evaluating the equivalent dual expressions in \cite{JournalMU}
by a suitable adaptation of the dual optimization parameters for superposition
coding given in Section \ref{sub:CNV_SC_EVAL}.  We focus our attention on
the latter, since it immediately provides a concrete lower bound even when
the optimization parameters are slightly suboptimal.

The parameters to Lapidoth's rate are two finite alphabets $\Xc_1$ and $\Xc_2$, 
two corresponding input distributions $Q_1$ and $Q_2$, and a function 
$\phi(x_1,x_2)$ mapping $\Xc_1$ and $\Xc_2$ to the channel input alphabet.
For any such parameters, the rate $R = R_1+R_2$ is achievable provided that \cite{JournalMU,Thesis}
\begin{align}
    R_{1} &\le \sup_{s\ge0,a(\cdot,\cdot)}\EE\left[\log\frac{q(\phi(X_{1},X_{2}),Y)^{s}e^{a(X_{1},X_{2})}}{\EE\big[q(\phi(\Xbar_{1},X_{2}),Y)^{s}e^{a(\Xbar_{1},X_{2})}\,|\, X_{2},Y\big]}\right]\label{eq:MAC_ExR1_Gen} \\
    R_{2} &\le \sup_{s\ge0,a(\cdot,\cdot)}\EE\left[\log\frac{q(\phi(X_{1},X_{2}),Y)^{s}e^{a(X_{1},X_{2})}}{\EE\big[q(\phi(X_{1},\Xbar_{2}),Y)^{s}e^{a(X_{1},\Xbar_{2})}\,|\, X_{1},Y\big]}\right],\label{eq:MAC_ExR2_Gen}
\end{align}
and at least one of the following holds:
\begin{align}
    R_{1} &\le \sup_{\rho_{2}\in[0,1],s\ge0,a(\cdot,\cdot)} \EE\left[\log\frac{\big(q(\phi(X_{1},X_{2}),Y)^{s}e^{a(X_{1},X_{2})}\big)^{\rho_{2}}}{\EE\Big[\Big(\EE\big[q(\phi(\Xbar_{1},\Xbar_{2}),Y)^{s}e^{a(\Xbar_{1},\Xbar_{2})}\,\big|\,\Xbar_{1}\big]\Big)^{\rho_{2}}\,\big|\,Y\Big]}\right] - \rho_{2}R_{2} \label{eq:MAC_ExR12_1_Gen} \\
    R_{2} &\le \sup_{\rho_{1}\in[0,1],s\ge0,a(\cdot,\cdot)} \EE\left[\log\frac{\big(q(\phi(X_{1},X_{2}),Y)^{s}e^{a(X_{1},X_{2})}\big)^{\rho_{1}}}{\EE\Big[\Big(\EE\big[q(\phi(\Xbar_{1},\Xbar_{2}),Y)^{s}e^{a(\Xbar_{1},\Xbar_{2})}\,\big|\,\Xbar_{2}\big]\Big)^{\rho_{1}}\,\big|\,Y\Big]}\right] - \rho_{1}R_{1}, \label{eq:MAC_ExR12_2_Gen}
\end{align}
where $(X_{1},X_{2},Y,\Xbar_{1},\Xbar_{2})\sim Q_{1}(x_{1})Q_{2}(x_{2})W(y|\phi(x_{1},x_{2}))Q_{1}(\xbar_{1})Q_{2}(\xbar_{2})$.

Recall the input distribution $Q_{UX}$ for superposition coding on the
second-order product channel given in \eqref{eq:CNV_ChoiceQU}--\eqref{eq:CNV_ChoiceQX2}.
Denoting the four inputs of the product channel as $\{(0,0),(0,1),(1,0),(1,1)\}$, 
we set $\Xc_1 = \{(0,0),(0,1),(1,0)\}$, $\Xc_2 = \Uc = \{0,1\}$, and
\begin{align}
    Q_{X_1} & =\frac{1}{1-Q_{1}^{2}}\left[\begin{array}{cccc}
    Q_{0}^{2} & Q_{0}Q_{1} & Q_{0}Q_{1}
    \end{array}\right] \\
    Q_{X_2} & =\left[\begin{array}{cc}
    1-Q_{1}^{2} & Q_{1}^{2}
    \end{array}\right] \\
    \phi(x_1,x_2) &= \begin{cases} x_1 & x_2 = 0 \\ (1,1) & x_2 = 1. \end{cases}
\end{align}
This induces a joint distribution $Q_{X_1X_2X}(x_1,x_2,x) = Q_{X_1}(x_1)Q_{X_2}(x_2)\openone\{x = \phi(x_1,x_2)\}$.
The idea behind this choice is that the marginal distribution
$Q_{X_2X}$ coincides with our choice of $Q_{UX}$ for SC.  

By the structure of our input distributions, there is in fact a one-to-one
correspondence between $(u,x)$ and $(x_1,x_2)$, thus allowing us
to immediately use the dual parameters $(s,a,\rho_1)$ from SC
for the expurgated parallel coding rate.  More precisely, using 
the superscripts $(\cdot)^{\mathrm{sc}}$ and $(\cdot)^{\mathrm{ex}}$ 
to distinguish between the two ensembles, we set
\begin{align}
    R_1^{\mathrm{ex}} &= R_1^{\mathrm{sc}} \\
    R_2^{\mathrm{ex}} &= R_0^{\mathrm{sc}} \\
    s^{\mathrm{ex}} &= s^{\mathrm{sc}} \\
    a^{\mathrm{ex}}(x_1,x_2) &= a^{\mathrm{sc}}(x_2,\phi(x_1,x_2)) \\
    \rho_1^{\mathrm{ex}} &= \rho_1^{\mathrm{sc}}.
\end{align}  
Using these identifications along with the choices of the superposition
coding parameters in \eqref{eq:CNV_s1}--\eqref{eq:CNV_a2}, we verified
numerically that the right-hand side of \eqref{eq:MAC_ExR1_Gen} 
(respectively, \eqref{eq:MAC_ExR12_2_Gen}) coincides with that of
\eqref{eq:SC_R1_Dual} (respectively, \eqref{eq:SC_Rsum_Dual}).
Finally, to conclude that the expurgated parallel coding rate recovers
\eqref{eq:CNV_BoundSC2}, we numerically verified that the rate $R_2$ 
resulting from \eqref{eq:MAC_ExR1_Gen} and \eqref{eq:MAC_ExR12_2_Gen}
(which, from \eqref{eq:CNV_R1R2}, is $0.0356005$) also satisfies
\eqref{eq:MAC_ExR2_Gen}.  In fact, the inequality is strict, with the 
right-hand side of \eqref{eq:MAC_ExR2_Gen} being at least $0.088$. 

\bibliographystyle{IEEEtran}
\bibliography{12-Paper,18-MultiUser,18-SingleUser,35-Other}

\end{document}